    \newcommand\email[1]{\_email #1\q_nil}
    \def\_email#1@#2\q_nil{%
      \href{mailto:#1@#2}{{\emailfont #1\emailampersat #2}}
    }
    \newcommand\emailfont{\sffamily}
    \newcommand\emailampersat{{\color{blue}\small@}}
\def\namedlabel#1#2{\begingroup
    #2%
    \def\@currentlabel{#2}%
    \phantomsection\label{#1}\endgroup
}
\DeclareMathAlphabet{\mymathbb}{U}{BOONDOX-ds}{m}{n}
\newtheorem{theorem}{Theorem}
\newtheorem{corollary}{Corollary}
\newtheorem{example}{Example}
\numberwithin{equation}{section}
\theoremstyle{plain}
\begin{document}

\begin{frontmatter}

\title{%\large 
Functional Data Analysis with Rough  {Sample} Paths?}
%functional data analysis with nowhere differentiable sampled paths?

\runtitle{FDA with Rough Sample Paths?}
% the d in Sampled is intentionally there. The paths are non-differentiable, and they are also sampled at discrete locations.
%Non-Differentiable Functional Data Analysis?
%Nonparametric Covariance Estimation for Non-Differentiable Functional Data

\begin{aug}
\author{\fnms{Neda} \snm{Mohammadi }\ead[label=e1]{neda.mohammadijouzdani@epfl.ch}} \and
\author{\fnms{Victor M.} \snm{Panaretos}\ead[label=e2]{victor.panaretos@epfl.ch}}

%\thankstext{t1}{Research supported by ...}

\runauthor{N. Mohammadi  \& V.M. Panaretos}

\affiliation{\'Ecole Polytechnique F\'ed\'erale de Lausanne}

\address{Institut de Math\'ematiques\\ \'Ecole Polytechnique F\'ed\'erale de Lausanne }
 \email{neda.mohammadijouzdani@epfl.ch}, \email{victor.panaretos@epfl.ch}
%\address{Institut de Math\'ematiques\\ \'Ecole Polytechnique F\'ed\'erale de Lausanne\\ \printead{e1}, \printead{e2} }

\end{aug}

\begin{abstract} 
Functional data are typically modeled as sample paths of smooth stochastic processes in order to mitigate the fact that they are often observed discretely and noisily, occasionally irregularly and sparsely. The smoothness assumption is imposed to allow for the use of smoothing techniques that annihilate the noise. At the same time, imposing the smoothness assumption excludes a considerable range of stochastic processes, most notably diffusion processes.  Under perfect observation of the sample paths, such processes would not need to be excluded from the realm of functional data analysis. In this paper, we introduce a careful modification of existing methods, dubbed the ``reflected triangle estimator", and show that this allows for the functional data analysis of processes with nowhere differentiable sample paths, even when these are discretely and noisily observed, including under irregular and sparse designs. Our estimator matches the established rates of convergence for processes with smooth paths, and furthermore attains the same optimal rates as one would get under perfect observation.  Thus, with reflected triangle estimation, the scope of applicability of much of the methodology developed for discretely/irregularly/noisily/sparsely sampled functional data is considerably extended. By way of simulation it is shown that the advantages furnished are reflected in practice, hinting at potential closer links with the field of diffusion inference.

%In this study we regard covariance kernel, $C$, for functional data  as a  function defined on the closed lower (upper) triangle $\left\{(s,t)\in [0,1]^2\;|\; 0 \leq t \leq s \leq 1  \right\}$
%and apply multivariate kernel smoothing method to obtain its (non-parametric) estimation. Asymptotic accuracy of this approach is verified under certain smoothness conditions of true covariance kernel $C$ restricted to  the mentioned domain. This  perspective,  in contrast to the existing approaches,  allows us to encompass a large variety of stochastic processes, including Brownian motion. We draw an analogy between the existing method and the one proposed here through simulation studies.
\end{abstract}

\bigskip
\begin{keyword}[class=AMS]
\kwd[Primary ]{62M}
\kwd[; secondary ]{62G08}
\end{keyword}

\begin{keyword}
\kwd{diffusion process}
\kwd{covariance kernel} %
\kwd{local polynomial regression}
\kwd{measurement error}
\kwd{nowhere differentiable path}
\kwd{reflection}
\kwd{sparse sampling}
\end{keyword}

\end{frontmatter}

\tableofcontents

\section{Introduction}

Functional Data Analysis (FDA) treats the problem of drawing statistical inferences pertaining to the law of a random element $X$ valued in an appropriate function space $\mathcal{H}$, based on $n$ independent (or weakly dependent) realisations $X_1,...,X_n$ distributed as $X$. The ambient function space is usually taken to be $\mathcal{H}=L^2[0,1]$. This allows for a very wide range of random functions $X$. Technical considerations require the minimal assumption that $X$ be almost surely continuous, namely in order to be able to make point-wise sense of $X$ as a stochastic process $\{X(t):[0,1]\rightarrow \mathbb{R}\}$ (see \citet{hsing_theoretical_2015}, p.175). But once this is done, the covariance operator of $X$ % {(if exists) -- if a process is a.s. continuous on [0,1] its covariance is also continuous and so is also trace-class} 
is assured to be trace-class and one can carry out the programme set out in classical monographs such as \citet{bosq_linear_2000}or \citet{ferraty_nonparametric_2006}. 

These monographs, as well a great number of theory/methods papers assume that one can perfectly observe the complete sample path ( {see, e.g., \citet{dauxois_asymptotic_1982},  \citet{hall_properties_2006}, \citet{hormann_consistency_2013} \cite{hormann_weakly_2010},  \citet{hall_methodology_2007},  \citet{panaretos_cramerkarhunenloeve_2013,panaretos_fourier_2013},  \citet{delaigle_methodology_2012}}). This ``complete observation" framework only requires continuity and encompasses processes that can be non-smooth, even nowhere differentiable, such as diffusion processes. When paths are only observed at finitely many locations, though, the path continuity assumption alone is not deemed enough, particularly when the sampled values are additionally corrupted by noise. It is supplemented by additional smoothness assumptions, which are then leveraged in order to non-parametrically estimate the mean and covariance function, as a first step towards further analysis. For instance, the approach popularised by  \citet{ramsay_functional_2005} assumes that $X$ possesses $\mathcal{C}^2$-paths, and advocates pre-smoothing each sampled path to construct fully observed proxies. \citet{yao_functional_2005}, on the other hand assume a $\mathcal{C}^2$ covariance which they smooth directly, an approach which adapts well to irregular and sparse sampling designs. Such assumptions enable the use of FDA techniques in the more realistic scenario of discrete/noisy observation, but limit its scope to processes with smooth paths.

Our purpose is to demonstrate that it is actually \emph{not necessary} to assume twice (or even once) differentiability of the sample paths in order to carry out a functional data analysis with noisily sampled paths. In particular, we show that one can obtain the same uniform and almost sure rates of convergence for mean and covariance estimation even with nowhere differentiable paths, potentially irregularly/sparsely sampled and with measurement error contamination.  {Our approach is based on a conceptually simple modification of procedures based on smoothing the covariance, e.g. \citet{yao_functional_2005}, \citet{hall_properties_2006-1} and \citet{li_uniform_2010}}. The key observation is that rough but continuous sample paths may well possess a covariance function that is smooth except on the diagonal. Prominent such examples include diffusion processes like Brownian motion, the Brownian bridge, or the Ornstein-Uhlenbeck process, to name but a few. So while the path-smoothing approach will fail, the covariance smoothing approach can still succeed if suitably modified to avoid smoothing over (and thus smearing) the singularity along diagonal. 

While the modification is conceptually simple, its ramifications go beyond a weakening of assumptions. From the point of view of theory, they challenge the tenet that Functional Data Analysis relies on \emph{smoothness} of the functional data. And, in doing so they hint at potential closer links with the area of inference for diffusions and related processes, which does not so far share considerable overlap with functional data analysis.  {From the practical point of view, the modification is easy to implement and enjoys the same theoretical guarantees as the well-established covariance smoothing approach. Thus, it can essentially ``automatically" extend much of the FDA toolbox constructed with covariance smoothing at its foundation, to cover a considerably broader range of processes}.

\indent  {To our best knowledge this is the first study to estimate the covariance structure of processes with rough sample paths in the FDA setting of irregular (perhaps sparse) and noise corrupted measurements of multiple trajectories. In a distinct context, \citet{pos_superconsistent_2020} aimed to detect covariance singularities, in the context of functional regression with ``points of impact". However, as they pointed out, their methodology does not bear on the problem of actually \emph{estimating} the covariance.   %use the irregularity property of the covariance function on the diagonal to uniquely identify some target points called ``pint of impacts". Their statistical strategy is not through covariance estimation which yields a result deviating from ours. 
}

\section{Problem and Background}\label{problem:notation}
Let $\mathcal{H} = L^2[0,1]$  be the Hilbert space of squared integrable real-valued functions on the unit interval $[0,1]$, equiped with the following inner product
\begin{eqnarray*}
 \langle f,g \rangle_{\mathcal{H}} = \int_{[0,1]}f(t)g(t)dt;\;\;\; f,g \in  \mathcal{H},
\end{eqnarray*}
and the induced norm $\Vert \cdot \Vert_{\mathcal{H}}$ . Let $X$ be a random element of $\mathcal{H}$ (formally a measurable map between some probability space $\left(\Omega , \mathcal{A}, \mathbb{P} \right)$ and $\mathcal{H}$) with almost surely continuous sample paths. Let $\mu(t)=\mathbb{E}\{X(t)\}$ be its mean and $C (\cdot, \cdot)= \mathrm{Cov}\left\{X(\cdot),X(\cdot)\right\}$ be its covariance function. The estimation of $\mu$ and $C$ is a basic first step involved in most functional data analyses, including regression, prediction, testing, and classification. This is to be done based on $n$  i.i.d curves which are observed only at some (random) nodes, subject to measurement error contamination. Concretely, we observe
\begin{eqnarray}\label{initial model}
Y_{ij}= X_i\left( T_{ij}\right) + U_{ij}; \; i=1,2,\ldots n, \; j=1,2,\ldots r_n,
\end{eqnarray}
where  $\left\{U_{ij}\right\}$ forms an i.i.d. array of measurement errors with finite variance $\sigma^2$, and $\left\{T_{ij}\right\}$ is an array of random design points, assumed without loss of generality to satisfy $T_{ik} < T_{ij}$, for $k <j$. Finally,  $\left\{r_n\right\}$ is  the number of measurements per sample path, allowed to take different values as $n$ grows, accordingly yielding dense ($r_n\rightarrow\infty$ as $n\rightarrow\infty$) or sparse ($r_n<R<\infty$) settings. The elements of the arrays $\left\{X_{i}\right\}$ ,  $\left\{T_{ij}\right\}$ and $\left\{U_{ij}\right\}$ are totally independent across all indices $i$ and $j$.

\smallskip
 {Perhaps the first general approach focusses on the dense regime $(r_n\rightarrow\infty)$ and was popularised by \citet{ramsay_functional_2005}.  It consists in smoothing each sample path individually, i.e. applying a smoother to each of $n$ scatterplots $\{(T_{ij}),Y_{ij}\}_{j=1}^{r_n}$.  Typically spline or local polynomial smoothing is applied, yielding $n$ smoothed proxy curves $\widetilde{X}_i$ in lieu of the true latent ${X}_i(t)$. The covariance of the proxy curves $\{\widetilde{X}_i\}$ is then used as a smooth estimator of the true underlying covariance operator. \citet{hall_properties_2006-1} assert that this proxy covariance enjoys the same convergence rate as of the empirical estimator obtained by fully functional curves $\left\{ X_i \right\}$  with no noise, under sufficiently dense observation.}

\citet{yao_functional_2005} examine a similar problem, but focus on the case where $r_n$ is finite (the sparse case). In such a setting, one cannot smooth each individual sample path. Therefore, differently from the approach of \citet{ramsay_functional_2005}, they opt to estimate the covariance directly by smoothing the ``raw covariance" 2D scatterplot $\left\{((T_{ij}, T_{ik}), Y(T_{ij}) Y(T_{ik})) \right\}_{1 \leq k,j\leq r_n,i\leq n}$. To circumvent identifiability issues near the diagonal (due to the noise contamination), they only retain scatterplot points with non-coincident time stamps,  $\left\{\left(T_{ij},T_{ik} \right) \;|\;  i=1,2,\ldots,n,\;k \neq j\right\}$,  and apply a local polynomial smoother on $[0,1]^2$ to obtain an estimator. Though the procedure was motivated by the sparse case $(r_n<R<\infty)$ it can just as well be applied to the dense case ($r_n\rightarrow\infty$), and \citet{li_uniform_2010} later obtained a complete description of the convergence rates of such a procedure regardless of the sampling regime, making the dependence on $r_n$ explicit.

 {Neither the path-smoothing approach (\citet{ramsay_functional_2005}), nor the covariance-smoothing approach (\citet{yao_functional_2005}) will apply in cases where the sample paths of $X$ are non-smooth, however:}
\begin{itemize}
\item  {An essential requirement for the procedure of \citet{ramsay_functional_2005} is that the sample paths be of class $\mathcal{C}^2$ almost surely. Therefore, there is no hope to use this procedure with paths of low regularity, for example paths that are continuous but may be nowhere differentiable.}

\item  {The procedure of \citet{yao_functional_2005}, requires the covariance to be of class $\mathcal{C}^2$ over $[0,1]^2$, and thus the procedure does not apply to continuous processes that do not have twice (mean-square) differentiable sample paths. This excludes diffusion processes, among others, whose covariance fails to be differentiable along the diagonal.}
\end{itemize}

\medskip
The damage incurred by ``roughness" of the sample paths to the path-smoothing approach seems difficult to repair (and in any case this approach only covers the dense sampling regime). Nevertheless, we show in the next section that a careful modification of the covariance smoothing approach, yields methodology that: on one hand is much more broadly applicable, including to nowhere differentiable processes like diffusion processes; and, on the other hand, attains the same rates of convergence as established by \citet{li_uniform_2010}, with the same explicit dependence on $r_n$ (hence covering both sparse and dense regimes).

 \section{Methodology and Theory}\label{main:result}

%This is the setting for almost any concrete example of diffusion process, for instance. 
 \subsection{Motivation}
 
 %keywords
 %call the old methods "smoothed square" vs "reflected triangle"
 
 The main observation behind our approach is that, while non-smooth processes may fail to have a covariance that is $\mathcal{C}^2$ on $[0,1]^2$, they may very well still possess a covariance that is smooth when restricted on the triangle 
 $$\bigtriangleup =  \left\{(s,t)\in [0,1]^2\;|\; 0 \leq t \leq s \leq 1  \right\},$$
and, of course, this restriction determines the overall covariance completely. To see this, consider the case of standard Brownian motion on $[0,1]$, whose sample paths are nowhere differentiable almost surely (and not even differentiable in mean-square). The corresponding covariance is $C(s,t)=\min\{s,t\}$, and is non-differentiable on the diagonal $s=t$. When restricted on $\bigtriangleup$ the covariance becomes $C(s,t)=t$ and is infinitely differentiable (even real analytic). Similar observations can be made for the standard Brownian bridge ($C(s,t)=t-st$ on $\bigtriangleup$), Ornstein-Uhlenbeck process ($C(s,t)=e^{-\beta(t+s)}(e^{2\beta t}-1)/(2\beta)$ on $\bigtriangleup$, where $\beta>0$ is the drift), geometric Brownian motion ($C(s,t)=\exp\{(t+s)/2\})\exp\{t\}-1)$ on $\bigtriangleup$),  { and many more examples.}  { This class of covariance functions also includes isotropic covariance functions, where $C(s,t)$ is a function of $\vert s-t \vert$, that in general admit lower degree regularity on the diagonal segment,  see panel (a) of Figure \ref{surface:matern:BM}.}\\
\noindent These are all instances of a very general setting with a covariance function that is of the form
 $$C (s,t) = g\{\mathrm{min}(s,t) , \mathrm{max}(s,t)\},$$
for some smooth function $g$. When this $g$ is at least of class $\mathcal{C}^2$, or equivalently when $C|_{\triangle}\in \mathcal{C}^2(\bigtriangleup)$ (see condition \ref{2-diff:C}), we can view $C$ as a function with symmetric components on $\bigtriangleup$ and $[0,1]^2\setminus\triangle$, and:
\begin{enumerate}
\item first apply a smoother only to indices from the subset $\left\{\left(T_{ij},T_{ik} \right) \;|\;  i=1,\ldots,n,\;k < j\right\}$ to estimate $C$ restricted to $\triangle$. In other words, the restriction $k<j$ smooth only the scatterplot defined over the ``triangle". 

\item then reflect the estimate on $\triangle$ with respect to the diagonal in order to estimate $C$ over all of $[0,1]^2$. 
\end{enumerate}

 {This approach entirely avoids the irregularity along the diagonal segment. Steps 1 and 2 suggest the term ``reflected triangle estimator" for our approach. By contrast, the traditional ``smoothed square estimator" of \citet{yao_functional_2005} smooths the set $\left\{\left(T_{ij},T_{ik} \right) \;|\;  i=1,\ldots,n,\;k \neq j\right\}$. In doing so it ignores the diagonal itself (which helps with denoising) but still ``reaches over" it, thus smearing the diagonal itself and producing an estimate that is $\mathcal{C}^1$ on the diagonal (and everywhere),  even asymptotically. This bias will especially impact the induced eigenvalue estimates,  as these are intrinsically linked to the area surrounding the diagonal. This effect becomes more transparent if one also concerns the fact  that in practice we only have access to some discretized matrix from of the underlying covariance functions}. 

The examples in Figure \ref{surface:matern:BM} illustrate the situation in two cases.  { In particular case (b) shows that even if we have a flat (nowhere strictly convex) surface on $\bigtriangleup$ this phenomenon is unavoidable.} 
 {In particular Example 1 below  shows that even for case (b) where  we have a flat (nowhere strictly convex) surface on $\bigtriangleup$ this phenomenon is unavoidable. }

The precise definition of our estimator is provided in the next subsection. Subsection \ref{Asymptotic Theory} shows that it inherits the same (optimal) rates as one would have in the smooth case.
 %####some surfaces
   \begin{figure}[t!]%
    \centering
    \subfloat[\centering ]{{\includegraphics[height=.19\textheight]{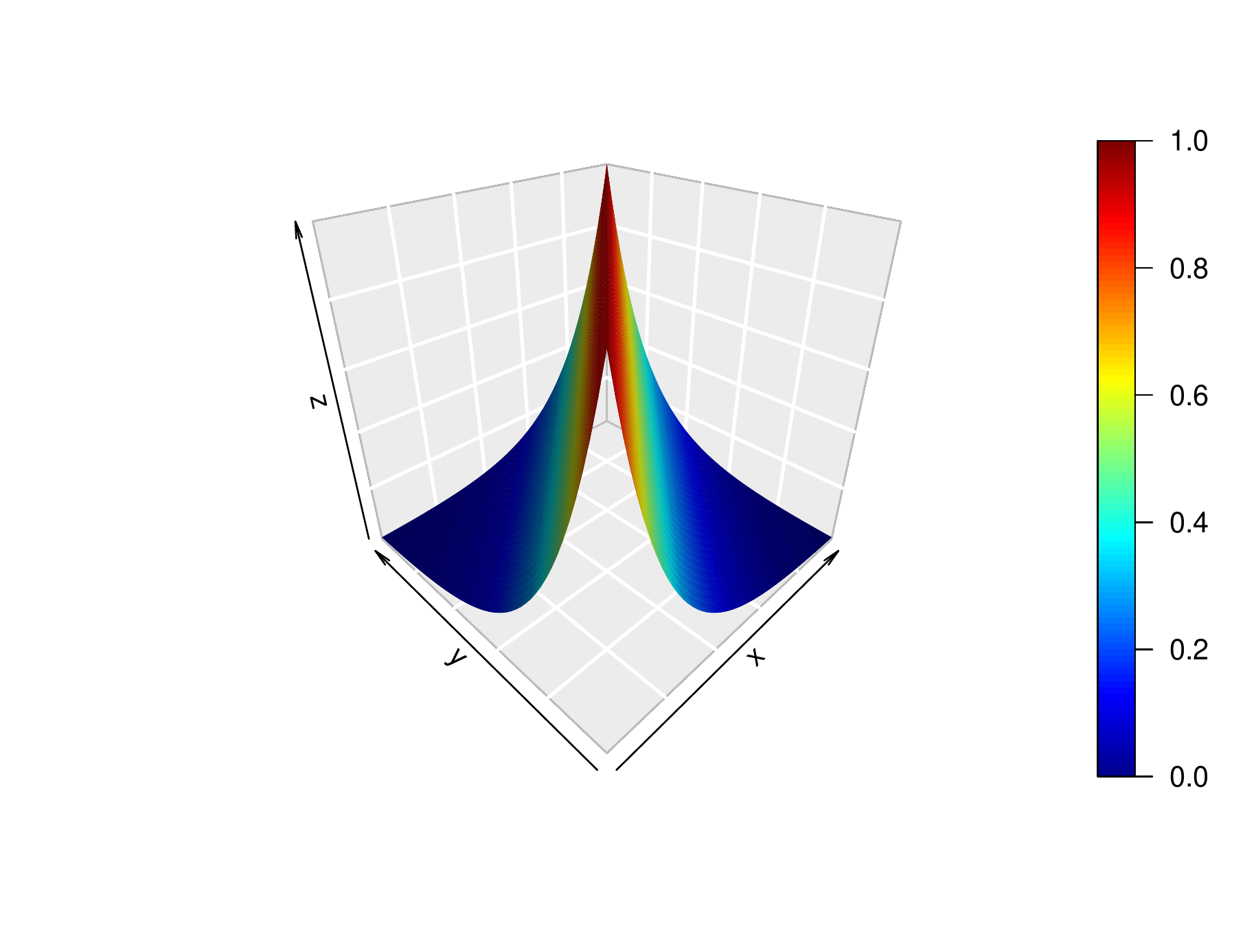} }} \label{surf:matern}%
    \qquad
    %\subfloat[\centering ]{{\includegraphics[height=.17\textheight]{} }}%
    %\qquad
    \subfloat[\centering ]{{\includegraphics[height=.19\textheight]{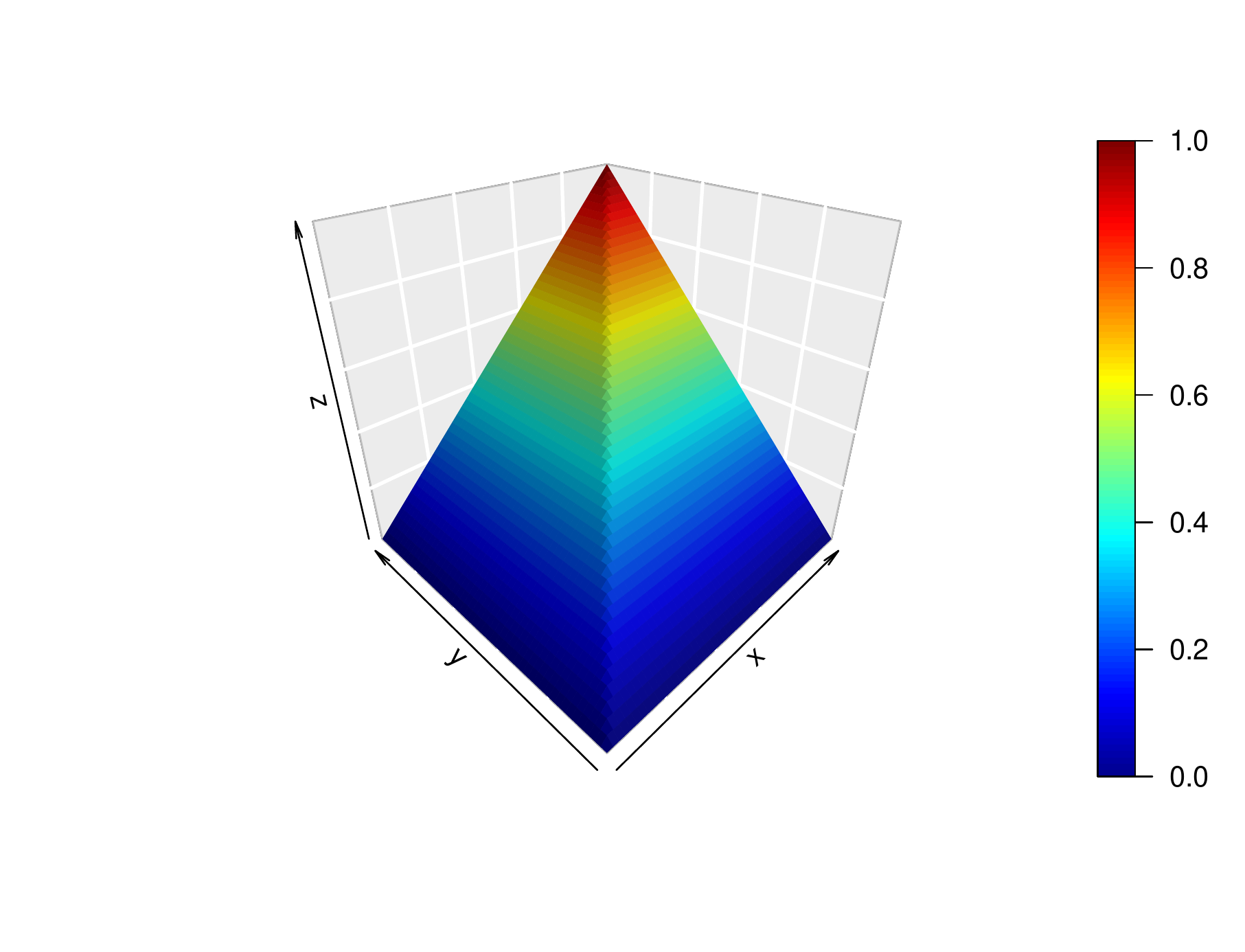} }}%
    \caption{(a): Mat{\`e}rn model: $C(s,t) = \frac{2^{0.5}}{\Gamma(0.5)} \left(\frac{ 1}{5} \right)^{0.5} \vert s-t\vert^{0.5} K_{0.5}\left( \frac{ 1}{5} \vert s-t\vert \right)$, where $\Gamma(\cdot)$ and $K_{0.5}(\cdot)$ are respectively gamma function and modified Bessel function.
   % (b): Gaussian model: $\exp \left[- \left( \frac{s-t}{0.3}\right)^2\right]$ 
    (b): Brownian motion: $C(s,t) = \min(s,t)$. }%
    \label{surface:matern:BM}%
\end{figure}
 %Here, we modify this approach in the sense that we look at covariance function $C(\cdot,\cdot)$ as a function defined on the closed lower (upper) triangle $\bigtriangleup =  \left\{(s,t)\in [0,1]^2\;|\; 0 \leq t \leq s \leq 1  \right\}$, which allows to use the pairs with indices from the set $\left\{\left(T_{ij},T_{ik} \right) \;|\;  i=1,2,\ldots,n,\;k < j\right\}$. In this way, imposing some smoothness property on  $C(\cdot,\cdot)$ restricted to $ \bigtriangleup$ (see condition \ref{2-diff:C}), we are able to include  examples like  standard Brownian motion with covariance kernel $C(\cdot,\cdot)$ satisfying the following infinite spectral decomposition 
%\begin{eqnarray}\label{cov:BM}
 %C_{0}(s,t) = \mathrm{min}\{s,t\} = \sum_{j=1}^{\infty} \frac{8}{
 %(2j-1)^2\pi^2} \mathrm{sin} \frac{(j-1)\pi t}{2}\mathrm{sin} \frac{(j-1)\pi s}{2},
%\end{eqnarray}
%or in general Gaussian processes with "triangular" covariance function of the form
%$$C (s,t) = g\{\mathrm{min}(s,t) , \mathrm{max}(s,t)\},$$
%for some smooth function $g$. Moreover, assumption \ref{fourier:W}, regarding kernel function $K(\cdot,\cdot)$,  allows non-compactly supported kernel functions which makes the estimators well-defined. This fact is illustrated via Example \ref{negative example}.

\subsection{The  {Reflected Triangle} Estimator}

Recall model \eqref{initial model} and write $X_{ij}$ for $X_i(T_{ij})$. For the sake of simplicity, we drop the index of $r_n$ and use $r$ instead, but it is understood that $r$ can vary with $n$. Using the classical decomposition 
\begin{eqnarray*}
 C(s, t)  = \mathbb{E}\left( X(s)X(t) \right) - \mathbb{E}\left( X(s)\right)\mathbb{E}\left( X(t) \right) =: G(s,t) -\mu(s) \mu(t),\;\;\;\;  0 \leq t \leq s \leq 1,
\end{eqnarray*}
we define the estimator
\begin{eqnarray*}
\label{local:lin:C}
 \widehat{C} (s,t) = \widehat{G}(s,t)-\widehat{\mu}(s)\widehat{\mu}(t) ;\;\;\;0 \leq t \leq s \leq 1,
\end{eqnarray*}
 {where $\widehat{\mu}(\cdot)$ and its limiting behaviour can be obtained by employing well established results in literature (e.g. in simulation studies as well as Corollary \ref{dense} below, we choose to use the linear kernel smoothing  method proposed in  \citet{li_uniform_2010}). % or \citet{hall_properties_2006-1} for example.
\\
%In the current result we are mainly concerning 
As for the second moment function $G(\cdot, \cdot)$, the main object of interest, we propose } \begin{equation}\label{reflection}\widehat{G}(s,t)=\begin{cases}
\widehat{a}_0,\quad 0\leq s\leq t\leq 1\\
\widehat{G}(t,s),\quad\mbox{otherwise}\\
\end{cases}, \end{equation} 
where $\widehat{a}_0$ is a local quadratic smoother of the restriction of $C$ on $\triangle$, defined via  
\begin{align}\label{local:lin:G}
\left( \widehat{a}_0 , \widehat{a}_1, \widehat{a}_2  \right)& = & \\ \nonumber
&&\underset{a_0,a_1, a_2}{\mathrm{argmin}}\frac{1}{n}\sum_{i=1}^{n}\frac{2}{r(r-1)}\sum_{1\leq k<j \leq r}
\left[\left\{Y_{ik}Y_{ij}-a_0-a_1\left( T_{ij}-s\right)-a_2 \left( T_{ik}-t\right)\right\}^2\right.
\\ \nonumber
&& \hspace{6cm} 
\times \left. K_{H_{G}}\left(\left( T_{ij}-s\right),\left( T_{ik}-t\right)\right)\right], 
\end{align}
 %and 
%\begin{eqnarray} \label{local:lin:mu}
%\left( \widehat{b}_0 , \widehat{b}_1 \right) = \underset{b_0,b_1}{\mathrm{argmin}}\frac{1}{n}\sum_{i=1}^{n}\frac{1}{r}\sum_{j=1}^{r}\left\{Y_{ij}-b_0-b_1\left( T_{ij}-s\right) \right\}^2K_{h_{\mu}^2}\left( T_{ij}-s\right),
%\end{eqnarray}
for all $0 \leq t \leq s \leq 1$ and $K_{H_G}(\cdot)=K_G(H_G^{1/2}\cdot)$ a kernel with (positive semi-definite) bandwidth matrix $H_G$ dependent on $n$. We highlight that the summation ranges only over $1\leq k < j \leq r$, i.e. only cross product observations belonging to the interior of $\triangle$ contribute to the estimate of $G$ on $\triangle$. The estimate of $G$ on $[0,1]^2\setminus\triangle$ is defined by reflection as in Equation \eqref{reflection}.  %Similarly, $h_{\mu}$ is one dimensional bandwidth parameter depending on $n$ and $K_{h^2_\mu} (\cdot) = h_{\mu}^{-1} K_{\mu} \left( h_{\mu}^{-1} \cdot  \right)$, for some univariate integrable kernel $K_{\mu}$. 
 {
Equivalently, equation \eqref{local:lin:G} assigns weight zero to all cross products $Y_{ij}Y_{ik}$ such that $j\geq k$. Obviously in absence of measurement error we could also include observations on the diagonal and replace the summation set $\{1\leq k<j \leq r\}$ by $\{1\leq k \leq j \leq r\}$. See Figure \ref{sim1:path and cov}.}
\\
%####sim1:path & cov
   \begin{figure}[t!]%
    \centering
    \subfloat[\centering ]{{\includegraphics[height=.18\textheight]{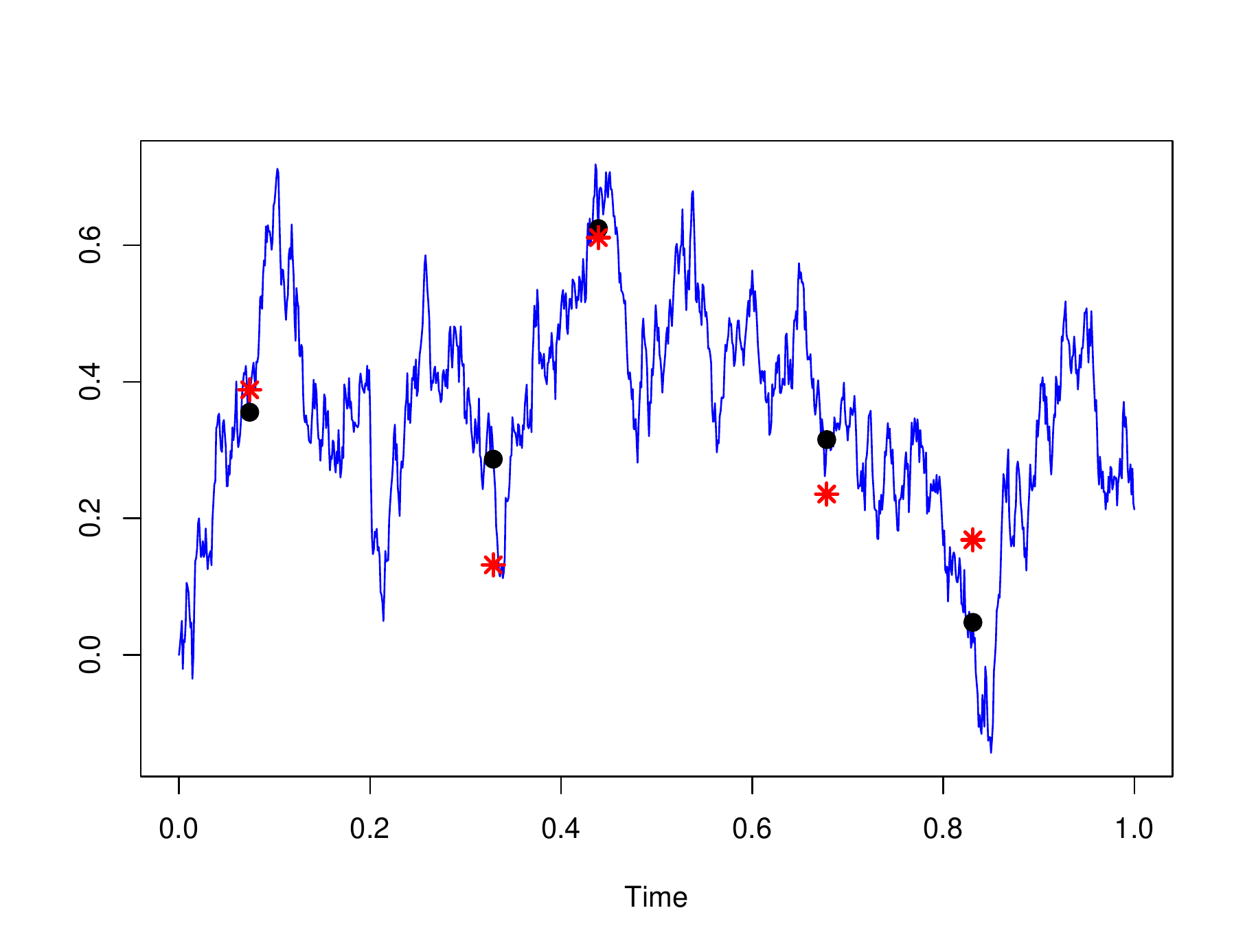} }}%
    \qquad
    \subfloat[\centering ]{{\includegraphics[height=.19\textheight]{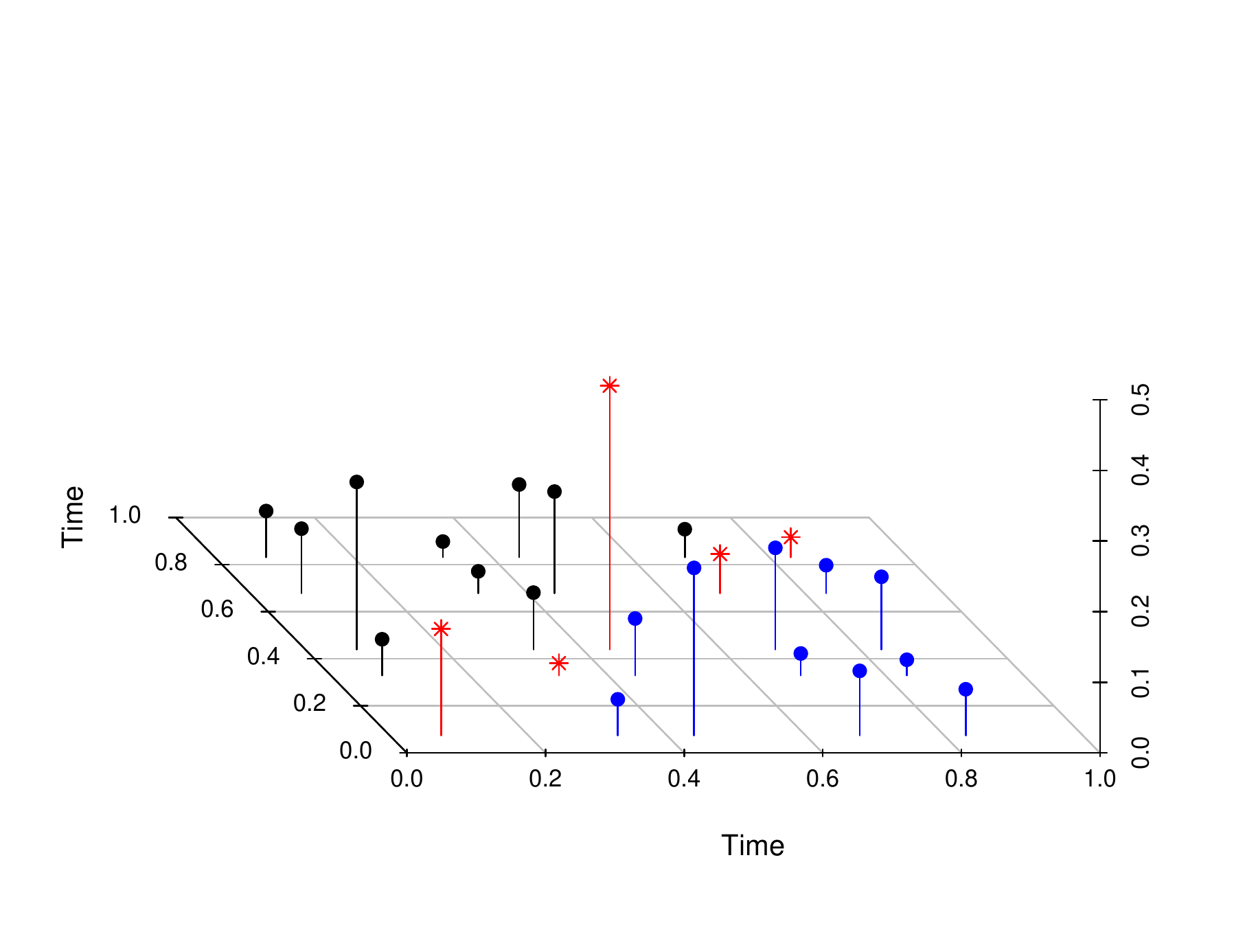} }}%
    % \qquad
   % \subfloat[\centering ]{{\includegraphics[height=.18\textheight]{} }}%
    \caption{Illustration on a single Brownian motion path: (a) The sample path in blue, five sampled values in black, and the corresponding noise-corrupted observations with red asterisks. (b) The cross products of all noisy observations. Only observations in blue are used for the definition of the triangular estimator. Observations in red are excluded to annhilate the measurement error. And observations in black are excluded from smoothing to avoid smearing the diagonal.}%
    \label{sim1:path and cov}%
\end{figure}

\noindent Despite its unconventional definition, our estimator can still be expressed in closed form, and thus implemented with ease. Define 
\begin{align*}
 \mathbf{T}_{i}^{T} &= \{ (T_{ij},T_{ik}) \}_{1 \leq k,j\leq r},& & \mathbf{Y}_i = \left\{ Y(T_{ij}) Y(T_{ik}) \right\}_{1 \leq k,j\leq r};& i=1,2,\ldots n,\\
 \mathbf{T}_{i,(s,t)} &= \left[ \mathrm{vech}(\mathbf{T}_{i}-(s,t)  \right]; \;\;\; 0 \leq t \leq s \leq 1, & & \mathbb{Y}_i = \left[ \mathrm{vech}(\mathbf{Y}_i)\right];& i=1,2,\ldots n, \\
 \mathbf{T}_{(s,t)} &=  \begin{bmatrix}
          \mathbf{T}_{1,(s,t)} \\
           \mathbf{T}_{2,(s,t)} \\
           \vdots \\
           \mathbf{T}_{n,(s,t)}
         \end{bmatrix},& & \mathbb{Y} = 
         \begin{bmatrix}
         \mathbb{Y}_1\\
         \mathbb{Y}_2\\
         \vdots\\
         \mathbb{Y}_n
         \end{bmatrix},&
         \\
         \mathbb{T}_{(s,t)} &= \left[J,\mathbf{T}_{(s,t)} \right],
\end{align*}
where $\mathrm{vech(M)}$ is a vector obtained by column-wise vectorization  of the matrix $M$ follows by elimination of the diagonal elements as well as upper triangle elements of $M$. The vector $J$,  whose dimension is clear from the context, is a vector with all elements equal to one.
%Next, $T_{(s)}$  is the one dimensional design vector, obtained by the first column of the matrix $T_{(s,t)}$, eliminating repeated entries. Likewise, let $Y$ is the corresponding response vector.
Finally, we define the vector $e_1$, whose dimension is clear from context, to be the vector with leading element equals to one and zeroes elsewhere, and 
\begin{eqnarray*}
\mathbf{W}_{(s,t)} &=& \mathrm{diag} \left\{  K_{H_{G}} \left(\mathbf{T}_{(s,t)} \right) \right\},
%\\
%\mathbf{W}_{(s)} &=& \mathrm{diag} \left\{  K_{h_{\mu}^2} \left(\mathbf{T}_{(s)} \right) \right\} ,
\end{eqnarray*}
where  $K_{H_{G}} \left(\mathbf{T}_{(s,t)} \right)$
%and $K_{h_{\mu}^2} \left(\mathbf{T}_{(s)} \right)$ 
should be understood element-wise. \\

\noindent In this notation, the local least square problem \eqref{local:lin:G}  
%and \eqref{local:lin:mu}
admits the following closed-form solution
\begin{eqnarray}\label{a_0}
 \widehat{a}_0 = \widehat{G}(s,t) =  e_{1}^{T}\left(\mathbb{T}_{(s,t)}^{T}\mathbf{W}_{(s,t)}\mathbb{T}_{(s,t)}\right)^{-1}\mathbb{T}_{(s,t)}^{T}\mathbf{W}_{(s,t)} \mathbb{Y}.
\end{eqnarray}
%\begin{eqnarray}\label{b_0}
% \widehat{b}_0 = \widehat{\mu}(s) = e_{1}^{T}\left(\mathbb{T}_{(s)}^{T}\mathbf{W}_{(s)}\mathbb{T}_{(s)}\right)^{-1}\mathbb{T}_{(s)}^{T}\mathbf{W}_{(s)} {Y},
%\end{eqnarray}
 See \citet{ruppert_multivariate_1994} for more details. In the remainder of the paper, we assume  $H_G$ to be diagonal, $H_G  = \mathrm{diag}\left( h^2_G, h^2_G\right)$. We set the $2$-variate kernel $K_G(\cdot,\cdot) $ as the product kernel based on symmetric univariate integrable kernel $W(\cdot)$ i.e. $K_G(\cdot,\cdot)  =W(\cdot)W(\cdot)  $. 
 %and  we also set $K_{\mu}(\cdot) =  W(\cdot)$. 
 The univariate kernel $W(\cdot)$ depends in general on $n$ and can be defined in one of the forms $\left\{ \mathbf{W}_n \right\}$ or  $\left\{ \mathbb{W}_n \right\}$ below:
\begin{eqnarray}\label{kernel1} \mathbf{W}_n(u) =  \mathrm{exp}\left( -\frac{u}{\sigma_n} \right),
\end{eqnarray}
\begin{eqnarray}\label{kernel2}
\mathbb{W}_n (u)= 
\left\{
\begin{array}{ll}
\mathbf{W}_1 (u)    &\mathrm{if} \;\vert u \vert < 1,  \\
\mathbf{W}_n (u)   & \mathrm{if} \;\vert u \vert \geq 1,
\end{array}
\right.
\end{eqnarray}
where  $\left\{ \sigma_n \right\}_n$ is a sequences of positive numbers tending to zero sufficiently fast such that $W_n(1^+)  \lesssim O\left(h_G^4  \right)$.  Section \ref{Appendix} explains this particular choice of kernels and  $W(\cdot)$, and discusses other choices.

\subsection{Asymptotic Theory} \label{Asymptotic Theory}
%88888888888888888888888888888888
We will now establish the almost sure uniform convergence rates of the estimators $\widehat{G}(\cdot, \cdot) $  
%and $\widehat{\mu}(\cdot)$
appearing in \eqref{a_0}. 
%and  \eqref{b_0}, respectively.  
Our conditions can be stated as follows:
\begin{itemize}
    \item[\namedlabel{bound:density}{C(0)}] There exists some positive number $M_T$ for which we have $0 < \mathbb{P}\left(T_{ij} \in [a,b] \right) \leq M_T (b-a)$, for all $i,j$ and $0 <a < b< 1$.
    
    \item[\namedlabel{sup|.|^a}{C(1)}] $\mathbb{E}\vert U_{11}      \vert^{\alpha}< \infty$, and  $\underset{0\leq s \leq 1}{\mathrm{sup}} \mathbb{E} \vert X_{1}(s)\vert^{\alpha} < \infty $.
    
%    \item[\namedlabel{2-diff:mu}{C(2)}] The second order derivative of the mean function $\mu$ exits (and hence is bounded); differentiability on the boundary points is defined via one sided limits.
    
    \item[\namedlabel{2-diff:C}{C(2)}] The second order partial derivatives of $C$  on the (lower)  triangle $\bigtriangleup=\left\{(s,t)\in [0,1]^2| 0 \leq t \leq s \leq 1  \right\}$  exist and are bounded \textit{i.e.} $\left.C\right|_\triangle \in \mathcal{C}^2(\bigtriangleup)$;  differentiability on boundary points is defined via one-sided limits.
    
%    \item[\namedlabel{fourier:W}{C(4)}] The Fourier transforms of the kernel sequence $W_n$ are integrable with bounded integrals, i.e. $\int \left\vert W_n^{\dagger}\left(  u\right)\right\vert du < M < \infty$.
    
 %   \item[\namedlabel{fourier:Id*W}{C(5)}] Defining $\Xi_n (u) = W_n(u) u$, the Fourier transforms of $\Xi_n$ are integrable with bounded integrals, i.e. $\int \left\vert \Xi_n^{\dagger}\left(  u\right)\right\vert du < M< \infty$.
    
  %  \item[\namedlabel{decay:W}{C(6)}]   The kernel functions $W_n(\cdot)$ are symmetric around zero and monotone non-increasing on $[1,+\infty)$.

 %   \item[\namedlabel{decay:Xi}{C(7)}] The functions  $\Xi_n(\cdot)$  are non-increasing on $\mathbb{R}^{\geq 1}$ and the limiting objects  $\Xi_n(1^+) := \underset{u \searrow 1 }{\mathrm{lim}} \Xi_n(u)$ ($= W_n(1^+)$) tend to zero as $n$ tends to infinity.
\end{itemize}

Assumptions \ref{bound:density} and  \ref{sup|.|^a}  are standard. Assumption \ref{2-diff:C} relaxes the usual smoothness assumption on the covariance, to accommodate non-smooth paths. Our main results on the almost sure convergence rates of our estimator is now as follows:
\begin{theorem}\label{Thm:Ghat-G}
Under Conditions \ref{bound:density}--\ref{2-diff:C}, for   $\alpha \in (4,\infty)$, it holds with probability 1 that
\begin{align}\label{bound:Ghat-G}
 \underset{0\leq t \leq s \leq 1}{\mathrm{sup}}\left|\widehat{G}(s, t) - G(s, t)\right|
 &=&  O\left[h_{G}^{-4} \frac{\mathrm{log}n}{n}\left( h_G^4 + \frac{h_G^3}{r}+ \frac{h_G^2}{r^2} \right) \right]^{1/2}+O\left( h^2_G  \right).
% + O\left( h_G^{-2} \right)   W_n\left( 1^{+}\right).
\end{align}
\end{theorem}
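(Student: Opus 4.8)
The plan is to treat $\widehat G$ as the bivariate local polynomial smoother of \eqref{local:lin:G} — linear in each of the two design coordinates — restricted to the closed triangle $\bigtriangleup$, and to follow the truncation-plus-discretization strategy of \citet{li_uniform_2010}, adapted to the two features that distinguish our estimator from the classical square-domain one: only the ordered cross-products $Y_{ij}Y_{ik}$ with $k<j$ enter the fit, and the local window becomes one-sided within $O(h_G)$ of the diagonal. From the normal equations behind \eqref{a_0} one writes $\widehat G(s,t)=G(s,t)+e_1^{\top}S_n(s,t)^{-1}\Psi_n(s,t)$, where $S_n(s,t)$ is the kernel-weighted empirical design-moment matrix and $\Psi_n(s,t)$ decomposes into a deterministic \emph{bias} part coming from the Taylor remainder of $G$ about $(s,t)$, and a \emph{stochastic} part equal to the kernel-weighted average of the centered products $Y_{ij}Y_{ik}-G(T_{ij},T_{ik})$. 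The proof then reduces to three statements, uniform over $(s,t)\in\bigtriangleup$ and uniform in the sampling regime ($r$ bounded or diverging): (i) $S_n(s,t)^{-1}$ is bounded with probability one; (ii) the bias part is $O(h_G^2)$; (iii) the stochastic part is $O\!\left(\left[h_G^{-4}\,n^{-1}\log n\,(h_G^4+h_G^3/r+h_G^2/r^2)\right]^{1/2}\right)$ almost surely. The last item carries all the difficulty.

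For (i) and (ii): under \ref{bound:density} the design density on $[0,1]$ is bounded above and bounded away from zero on subintervals, so the standard uniform law of large numbers for kernel sums gives $S_n(s,t)\to S(s,t)$ almost surely, uniformly, where $S(s,t)$ is the usual local-linear moment matrix off the diagonal and its ``half-window'' analogue on $\{s=t\}$, because the constraint $T_{ik}<T_{ij}$ truncates the window only within $O(h_G)$ of the diagonal; in both cases $S(s,t)$ is nonsingular with uniformly bounded inverse. The fast-decaying kernels \eqref{kernel1}--\eqref{kernel2} are what make it legitimate to treat them as effectively compactly supported in this boundary bookkeeping, the residual tail being negligible through the choice $W_n(1^+)\lesssim O(h_G^4)$. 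For (ii), a second-order Taylor expansion of $G$ on $\bigtriangleup$ — valid since $G|_{\bigtriangleup}\in\mathcal{C}^2(\bigtriangleup)$ under \ref{2-diff:C} — leaves, after the local-linear projection annihilates the constant and linear parts, a remainder of order $h_G^2\sup_{\bigtriangleup}|\partial^2 G|$, uniformly including at the diagonal edge and the two corners, since the linear fit is design-adaptive there.

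The core is (iii). First decompose $Y_{ij}Y_{ik}-G(T_{ij},T_{ik})=\{X_i(T_{ij})X_i(T_{ik})-G(T_{ij},T_{ik})\}+X_i(T_{ij})U_{ik}+U_{ij}X_i(T_{ik})+U_{ij}U_{ik}$ (all four time indices distinct because $k<j$, so this centered quantity has conditional mean zero given the design). Since \ref{sup|.|^a} grants only $\alpha$ moments, truncate $X_i(T_{ij})$ and $U_{ij}$ at a level $\tau_n$ polynomial in $n$; the contribution of the discarded tails is handled almost surely by Borel--Cantelli, using $\mathbb{E}|X_1(s)|^{\alpha},\ \mathbb{E}|U_{11}|^{\alpha}<\infty$ and, crucially, $\alpha>4$ to keep the resulting series summable against the $h_G^{-4}$ prefactor and the polynomial grid size appearing below. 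For the bounded (truncated) part, cover $\bigtriangleup$ by a grid of polynomial mesh, bound the maximum over grid points by Bernstein's inequality applied to the average over the $n$ independent curves, and control the oscillation between grid points by the Lipschitz constant of the kernel (finite for \eqref{kernel1}--\eqref{kernel2} up to the negligible $\sigma_n$-dependence), so the discretization error is of smaller order. Bernstein requires the variance of the per-curve term $\frac{2}{r(r-1)}\sum_{1\le k<j\le r}\{\cdots\}K_{H_G}(\cdots)$; decomposing the resulting double sum by the number of time indices shared between two cross-products yields exactly three regimes: the ``diagonal'' pairs with $\{j,k\}=\{j',k'\}$ ($O(r^{-2})$ after normalization, one free kernel integration, hence $h_G^2$); pairs sharing exactly one index ($O(r^{-1})$, two integrations, hence $h_G^3$, the relevant conditional covariance being generically $O(1)$ through fourth moments of $X_i$ and through the $\sigma^2$ noise cross-term); and pairs sharing none ($O(1)$ with $O(1)$ conditional covariance — finite by \ref{sup|.|^a} since $\alpha>4$ — but four independent kernel integrations, hence $h_G^4$). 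Thus the per-curve variance is $\asymp h_G^4+h_G^3/r+h_G^2/r^2$; dividing by $n$, inserting the $h_G^{-2}$ blow-up of $S_n(s,t)^{-1}$ twice, and adding the $\log n$ from the union bound over the grid reproduces precisely the variance term of \eqref{bound:Ghat-G}. Combining (i)--(iii) gives the bound on $\bigtriangleup$; and since \eqref{reflection} defines $\widehat G$ off $\bigtriangleup$ by reflection while $G$ is symmetric, the same bound holds as a supremum over all of $[0,1]^2$.

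The main obstacle will be step (iii): obtaining an \emph{almost sure} (not merely in-probability) uniform rate with only $\alpha>4$ moments forces the truncation/Borel--Cantelli bookkeeping to be executed carefully against both the $h_G^{-4}$ blow-up and the polynomial grid cardinality, and the restriction to ordered pairs $k<j$ together with the one-sided window at the diagonal means that the invertibility of $S_n$ and the three-regime variance decomposition have to be re-derived from scratch rather than quoted from the square-domain analyses of \citet{yao_functional_2005} and \citet{li_uniform_2010} — although the combinatorics only affect the constants, not the rate.
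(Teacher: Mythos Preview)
Your proposal is correct and its overall architecture coincides with the paper's: the same normal-equations decomposition of $\widehat G-G$, the same second-order Taylor bias on $\bigtriangleup$ giving $O(h_G^2)$, the same four-way split of the centered cross-products into $U_{ij}U_{ik}$, $X_{ij}U_{ik}$, $U_{ij}X_{ik}$, $X_{ij}X_{ik}-G(T_{ij},T_{ik})$, the same truncation-plus-exponential-inequality treatment, and the same three-regime per-curve variance calculation producing $h_G^4+h_G^3/r+h_G^2/r^2$. The one methodological difference is in how uniformity over $(s,t)\in\bigtriangleup$ is obtained in the concentration step. You take the classical route of \citet{li_uniform_2010}: polynomial grid, union bound over grid points, Lipschitz oscillation between them. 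The paper instead exploits the Fourier conditions \ref{fourier:W}--\ref{decay:Xi} specific to the kernels \eqref{kernel1}--\eqref{kernel2}: each kernel weight is replaced by its Fourier inversion formula, the sum is split according to whether $(T_{ij},T_{ik})$ falls in the $h_G$-box about $(s,t)$, the out-of-box piece is killed directly by $W_n(1^+)\lesssim h_G^4$, and the in-box piece is bounded by $O(h_G^{-2})$ times an indicator sum which, being piecewise constant in $(s,t)$, requires no oscillation control before Bennett is applied. Your route is more portable (it works for any Lipschitz kernel, at the cost of tracking the $\sigma_n$-dependence you flag as ``negligible''); the paper's route is tied to its Fourier-integrable kernel class but sidesteps the grid bookkeeping entirely. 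Neither choice affects the rate.
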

%\begin{theorem}\label{Thm:muhat-mu}
%Under Conditions \ref{bound:density}- \ref{2-diff:mu},  for $\alpha \in (2,\infty)$, it holds with probability 1 that
%\begin{eqnarray}\label{bound:muhat-mu}
% \underset{0 \leq t \leq 1}{\mathrm{sup}}\vert\widehat{\mu}(t) - \mu (t)\vert  &=&  O\left[ h^{-2}_{\mu} \frac{\mathrm{log}n}{n}\left(h^2_{\mu} + \frac{h_{\mu}}{r}  \right) \right]^{1/2}+O\left( h^2_{\mu}  \right).
 %+ O\left( h_{\mu}^{-1} \right)  W_n\left( 1^{+}\right). \end{eqnarray}
%\end{theorem}
The following corollary states sufficient conditions such that, under a dense sampling scheme,  the right hand side of \eqref{bound:Ghat-G}  will be dominated by $(\log n/n  )^{1/2}$.  {This  similarly happens for the discrepancy %$\underset{0 \leq t \leq 1}{\mathrm{sup}} \vert\widehat{\mu}(t) - \mu (t)\vert$
$\widehat{\mu}(\cdot)$ and $\mu(\cdot)$, where $\widehat{\mu}$ is obtained based on an application of locally weighted least square regression method with the same univariate kernel $W(\cdot)$ as above and bandwidth parameter $h_{\mu}$, see Theorem 3.1 of \citet{li_uniform_2010}. }
%Note that this is essentially the optimal (Hardy-Littlewood) almost-sure rate under complete observation, attained by the empirical mean and covariance (i.e. $\log n$ cannot be improved to $\log\log n$, by the law of the iterated logarithm in Banach space. 
 It thus follows that, under dense observation, the rate of our covariance estimator matches the optimal rate under complete observation (see Corollary 4.1 of \citet{bosq_linear_2000}), bridging the gap with ``fully observed theory" that does not assume smoothness (as described in the introduction). 
\begin{corollary}[Rates Under Dense Observation]\label{dense}
Assume dense sampling scheme, i.e. that $r_n \geq M_n$, for some increasing sequence $M_n\uparrow\infty$,   and $M_n^{-1} \lesssim h_{\mu} , h_{G} \lesssim \left( \frac{\mathrm{log}n}{n} \right)^{1/4} $, and  Conditions \ref{bound:density}--\ref{2-diff:C}, for   $\alpha \in (4,\infty)$. Then it holds with probability 1 that
$$\underset{0\leq t \leq s \leq 1}{\mathrm{sup}} \vert C(s,t) - \widehat{C}(s,t) \vert  = O  \left( \frac{\mathrm{log}n}{n} \right)^{1/2}.$$
In particular we have, with probability 1, that $\Vert C - \widehat{C}  \Vert_{HS}^2  =\int_0^1\int_0^1 (C(s,t) - \widehat{C} (s,t))^2 dsdt  = O  \left( \frac{\mathrm{log}n}{n} \right)$.
\end{corollary}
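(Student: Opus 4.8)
The plan is to combine the uniform rate for $\widehat G$ from Theorem~\ref{Thm:Ghat-G} with an analogous uniform rate for the mean estimator $\widehat\mu$, and then control the product decomposition $\widehat C=\widehat G-\widehat\mu\otimes\widehat\mu$ against $C=G-\mu\otimes\mu$. First I would invoke Theorem~3.1 of \citet{li_uniform_2010} for the local linear mean estimator with bandwidth $h_\mu$: it gives, with probability one, $\sup_{t}|\widehat\mu(t)-\mu(t)| = O\big([h_\mu^{-2}(\log n/n)(h_\mu^2 + h_\mu/r)]^{1/2}\big) + O(h_\mu^2)$. Under the corollary's hypotheses $M_n^{-1}\lesssim h_\mu \lesssim (\log n/n)^{1/4}$ and $r_n\ge M_n\uparrow\infty$, the stochastic term is $O\big([(\log n/n)(1 + 1/(rh_\mu))]^{1/2}\big)$; since $rh_\mu \ge M_n h_\mu \gtrsim 1$ the bracket is $O(\log n/n)$, while the bias term $h_\mu^2 \lesssim (\log n/n)^{1/2}$. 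Hence $\sup_t|\widehat\mu-\mu| = O((\log n/n)^{1/2})$ a.s.

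Next I would do the same bookkeeping for $\widehat G$. Plugging $h_G \lesssim (\log n/n)^{1/4}$ and $r_n \ge M_n$ into \eqref{bound:Ghat-G}, the stochastic part is $O\big([(\log n/n)(1 + 1/(rh_G) + 1/(r^2 h_G^2))]^{1/2}\big)$, and again $rh_G \gtrsim M_n h_G \gtrsim 1$ forces the parenthesis to be $O(1)$, so this term is $O((\log n/n)^{1/2})$; the bias $h_G^2 \lesssim (\log n/n)^{1/2}$ contributes the same order. Therefore $\sup_{0\le t\le s\le 1}|\widehat G(s,t)-G(s,t)| = O((\log n/n)^{1/2})$ a.s. The point of the upper restriction $h_\mu,h_G \lesssim (\log n/n)^{1/4}$ is precisely to keep the squared bias at or below the parametric rate, while the lower restriction $\gtrsim M_n^{-1}$ kills the $1/(rh)$ terms generated by the within-curve dependence.

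Then I would assemble the covariance bound. Write
\begin{align*}
\widehat C(s,t)-C(s,t) &= \big(\widehat G(s,t)-G(s,t)\big) - \big(\widehat\mu(s)\widehat\mu(t)-\mu(s)\mu(t)\big),
\end{align*}
and for the second term use $\widehat\mu(s)\widehat\mu(t)-\mu(s)\mu(t) = (\widehat\mu(s)-\mu(s))\widehat\mu(t) + \mu(s)(\widehat\mu(t)-\mu(t))$, so that
\begin{align*}
\sup_{s,t}|\widehat\mu(s)\widehat\mu(t)-\mu(s)\mu(t)| \le \big(\|\widehat\mu\|_\infty + \|\mu\|_\infty\big)\,\sup_t|\widehat\mu(t)-\mu(t)|.
\end{align*}
Since $\|\mu\|_\infty<\infty$ (continuity on $[0,1]$, and finiteness follows from \ref{sup|.|^a}) and $\|\widehat\mu\|_\infty \le \|\mu\|_\infty + \sup_t|\widehat\mu-\mu| \to \|\mu\|_\infty$ a.s., the prefactor is a.s.\ bounded, and the product term is $O((\log n/n)^{1/2})$ a.s. Combining with the $\widehat G$ bound gives the first claim, $\sup_{0\le t\le s\le 1}|\widehat C(s,t)-C(s,t)| = O((\log n/n)^{1/2})$ a.s.; by the reflection symmetry of both $\widehat C$ and $C$ the same supremum over all of $[0,1]^2$ holds. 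The Hilbert--Schmidt statement is then immediate: $\|C-\widehat C\|_{HS}^2 = \int_0^1\int_0^1 (C-\widehat C)^2 \le \sup_{[0,1]^2}|C-\widehat C|^2 = O(\log n/n)$ a.s.

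The only genuinely delicate point is making sure the bandwidth window $M_n^{-1}\lesssim h_\mu,h_G \lesssim (\log n/n)^{1/4}$ is non-empty and that both constraints can be met simultaneously for the stated rates; this just needs $M_n$ to grow slowly enough that $M_n^{-1} \lesssim (\log n/n)^{1/4}$, i.e.\ $M_n \gtrsim (n/\log n)^{1/4}$, which one reads off as an implicit requirement (or simply notes that the corollary is vacuous otherwise). A minor secondary check is that the mean-estimator rate from \citet{li_uniform_2010} is being quoted under compatible moment conditions — here $\alpha\in(4,\infty)$ in \ref{sup|.|^a} is more than enough. Beyond that, the proof is entirely a matter of substituting the density conditions into the two previously established uniform rates and propagating them through an elementary product inequality, so I do not anticipate any real obstacle.
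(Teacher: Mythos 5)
Your proposal is correct and follows essentially the same route the paper intends: substitute the bandwidth window $M_n^{-1}\lesssim h_\mu,h_G\lesssim(\log n/n)^{1/4}$ into the bound \eqref{bound:Ghat-G} and into Theorem~3.1 of \citet{li_uniform_2010} for $\widehat\mu$, then pass to $\widehat C$ via the decomposition $\widehat C-C=(\widehat G-G)-(\widehat\mu\widehat\mu-\mu\mu)$ and the elementary product bound, with the Hilbert--Schmidt claim following by integrating the squared uniform bound. Your additional remarks (non-emptiness of the bandwidth window and boundedness of $\|\mu\|_\infty$ via \ref{sup|.|^a}) are accurate and do not change the argument.
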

\section{Simulations}\label{simulation}
%\bigskip

To illustrate the finite-sample performance of our method, we run a small simulation study on two examples, one involving a nowhere differentiable process, and one involving a $\mathcal{C}^2$ process. In either case, we apply our ``triangle" method as well as the classical ``square" smoothing methods (\citet{yao_functional_2005} and \citet{li_uniform_2010}, for example), and compare the results. Though our method is applicable in both examples (since it rests on weaker assumptions), the classical covariance smoothing method only applies to the smooth example.  Naturally, we expect our method to perform better in the rough example, and worse in the smooth example (since it does not take full advantage of the smoothness). The interesting point we observe, however, is that our method performs considerably better in the rough case, and comparably well in the $\mathcal{C}^2$ case.   {See Figures \ref{sim1:surfaces} and \ref{sim2:surfaces} which depict the true underlying covariance functions  as well as the resulting estimated functions based on one of the simulation runs applying each of the methods discussed above.}\\

\noindent \textbf{Example 1: } First, we consider the example of a standard Brownian motion, with covariance of the form $C(s,t) = \mathrm{min}\{s,t\}$, $0 \leq s,t \leq 1$. This is of class $\mathcal{C}^{\infty}$ on $\bigtriangleup$, but fails to be $\mathcal{C}^1$  on the unit square. We take $n=100$, and $r\in\{5,20,50\}$, and run $100$ replications in each case.  The results are summarised in Figures \ref{sim1:box:HS_dist}-\ref{sim1:eigvecs}; to allow for a compact presentation, each figure uses blue to demonstrate the target value; gray or black are used to denote the values corresponding to the triangular method; and red is assigned to the square method. The proposed method considerably outperforms the existing classical approach in estimating the covariance kernel, as well as its spectrum. Upon closer inspection we saw that the effect was particularly pronounced near the diagonal, as is reasonable to expect. This is reflected in the estimated eigenvalues: the triangular method performs better in estimating the leading eigenvalues, and substantially better for eigenvalues of larger order, see Figures \ref{sim1:box:eigvals1-2}-\ref{sim1:box:eigvals10-20}. Figure \ref{sim1:box:eigvecs} similarly shows that the triangular method performs well in estimating all the leading eigenfunctions, even for as few as $r=5$ sampled points per curve. By contrast the square method generally encounters considerable difficulties with eigenfunctions other than the leading one, and require considerably higher values of $r$ in order to capture higher order eigenfunctions.

%%%%%%%% figures
%% Hs , sigms hat , eigvals
\begin{figure}[t!]%
    \centering
    \subfloat[\centering ]{{\includegraphics[height=.11\textheight]{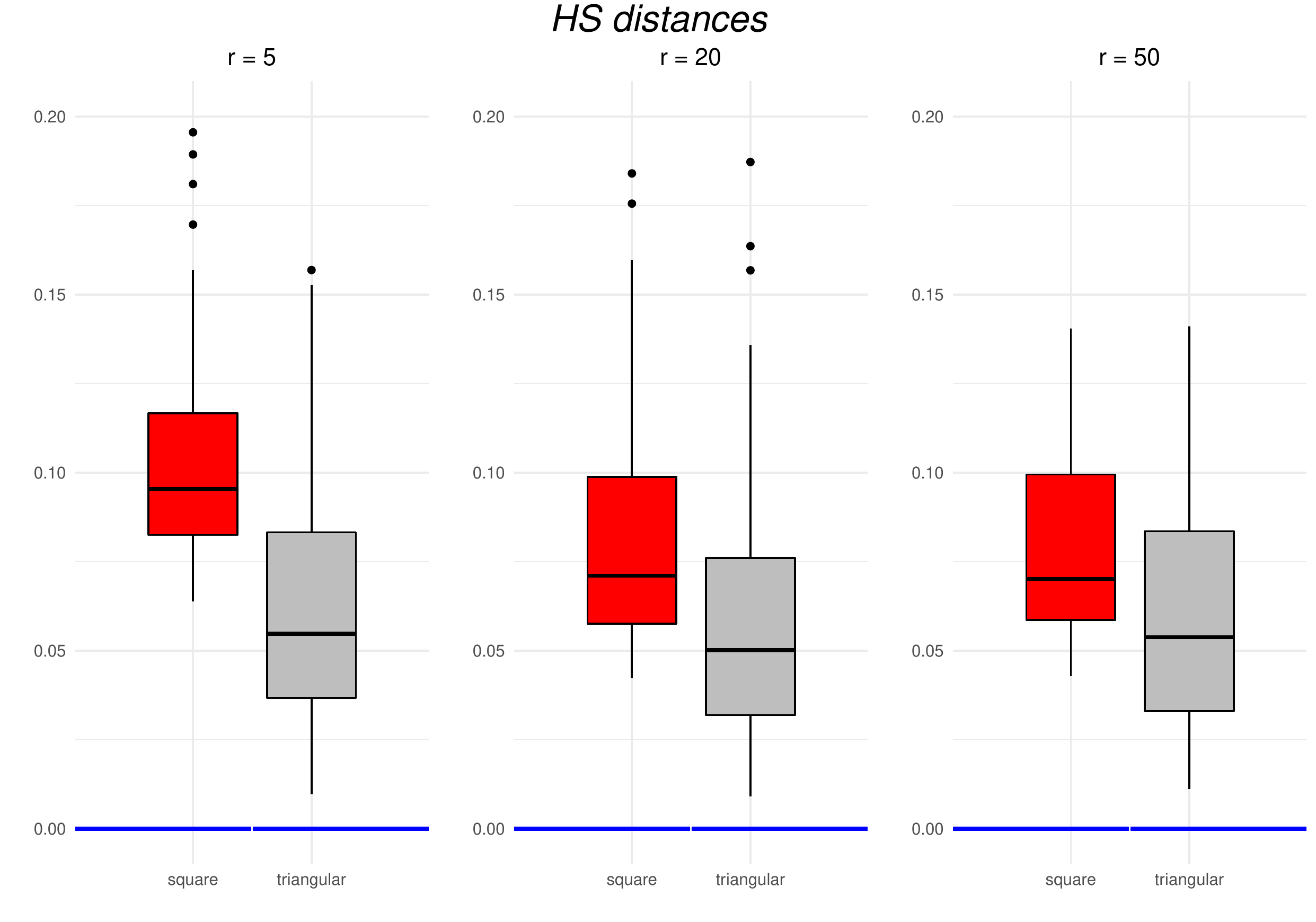}  }
    \label{sim1:box:HS_dist}}%
    \qquad
    \subfloat[\centering ]{{\includegraphics[height=.11\textheight]{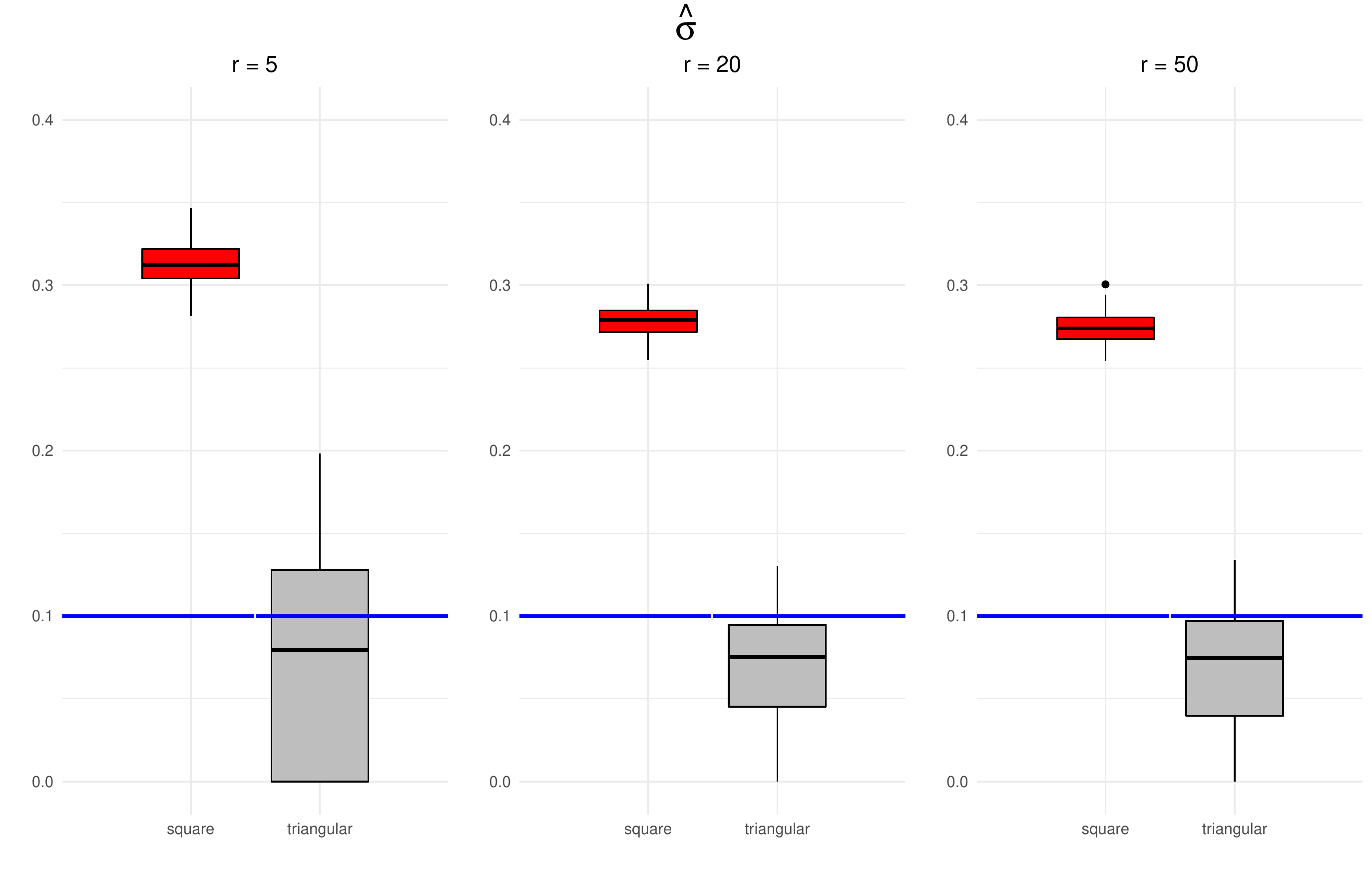} } \label{sim1:box:std_err_hat}}%
    \quad
    \subfloat[\centering ]{{\includegraphics[height=.11\textheight]{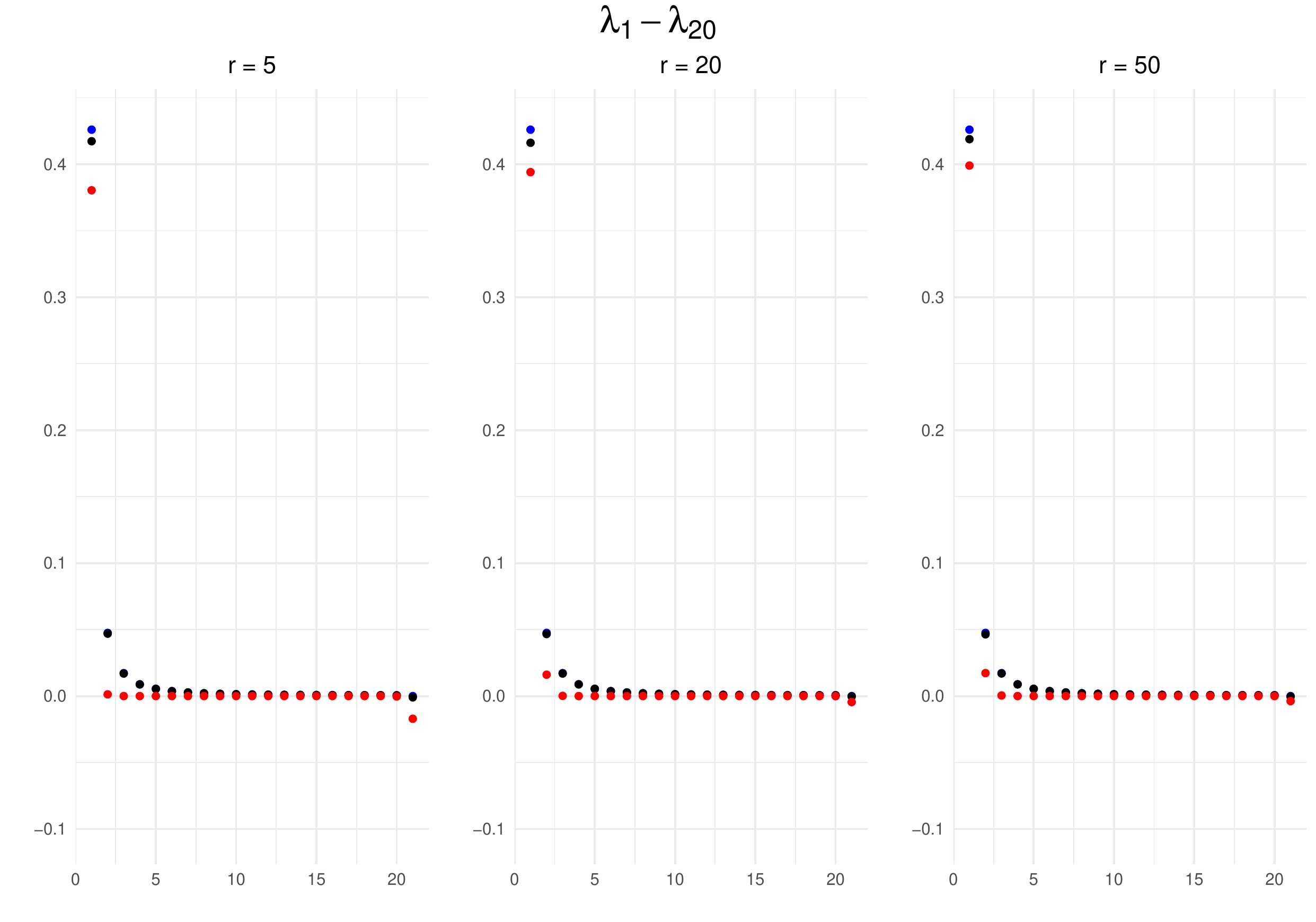} }\label{sim1:eigvals}}%
    \caption{Example 1: (a) Box plots for $\Vert C - \widehat{C} \Vert_{HS}$. (b) Box plots for $\widehat{\sigma}$. (c) The first 20 eigenvalues, $\lambda_1$ to $ \lambda_{20}$,  and the average of their estimates over the simulation runs.}%
    \label{sim2:surfaces}%
\end{figure}

%% eigvals'  boxplots
\begin{figure}[t!]%
    \centering
    \subfloat[\centering ]{{\includegraphics[height=.11\textheight]{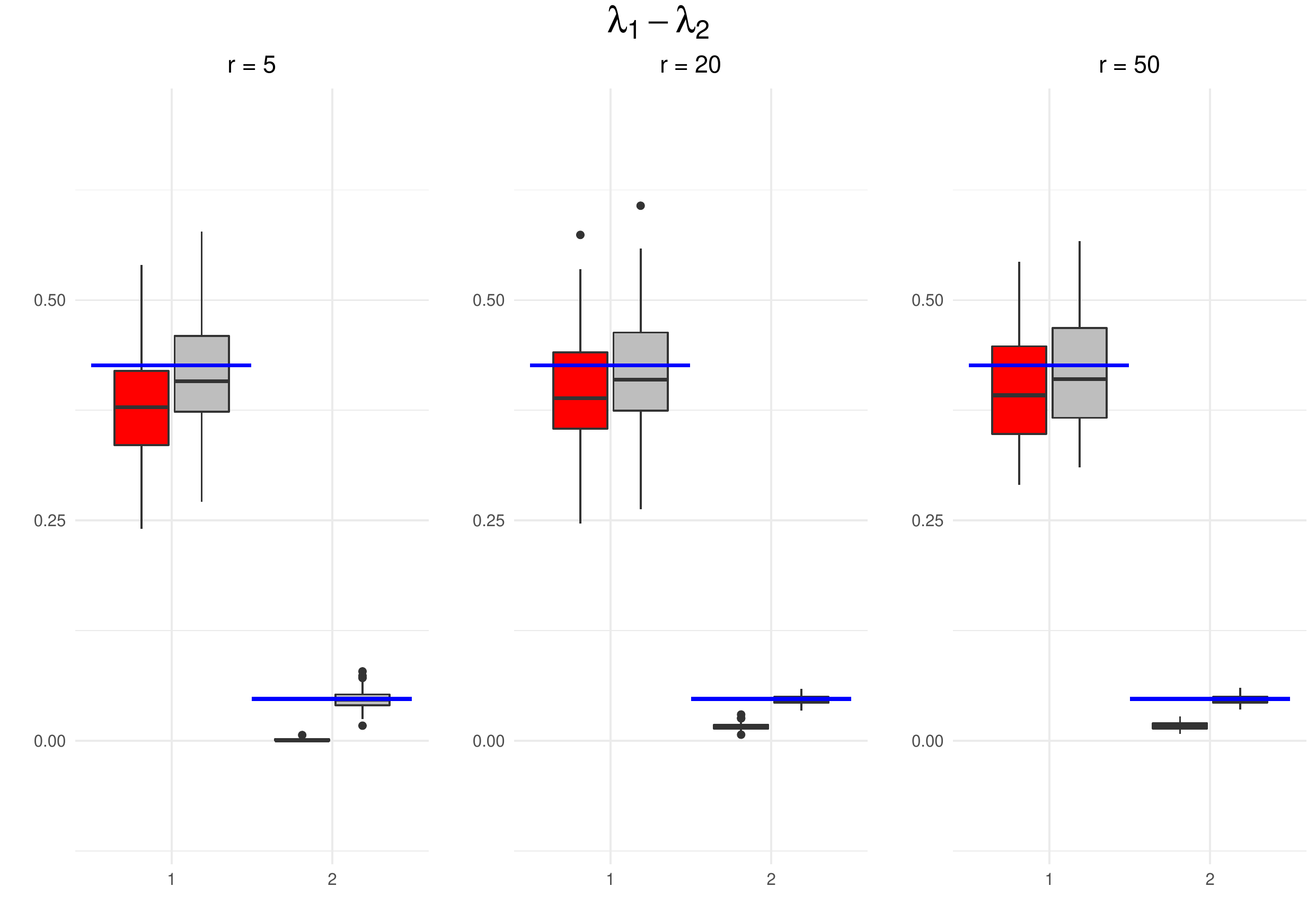} }
   \label{sim1:box:eigvals1-2}}%
    \qquad
    \subfloat[\centering ]{{\includegraphics[height=.11\textheight]{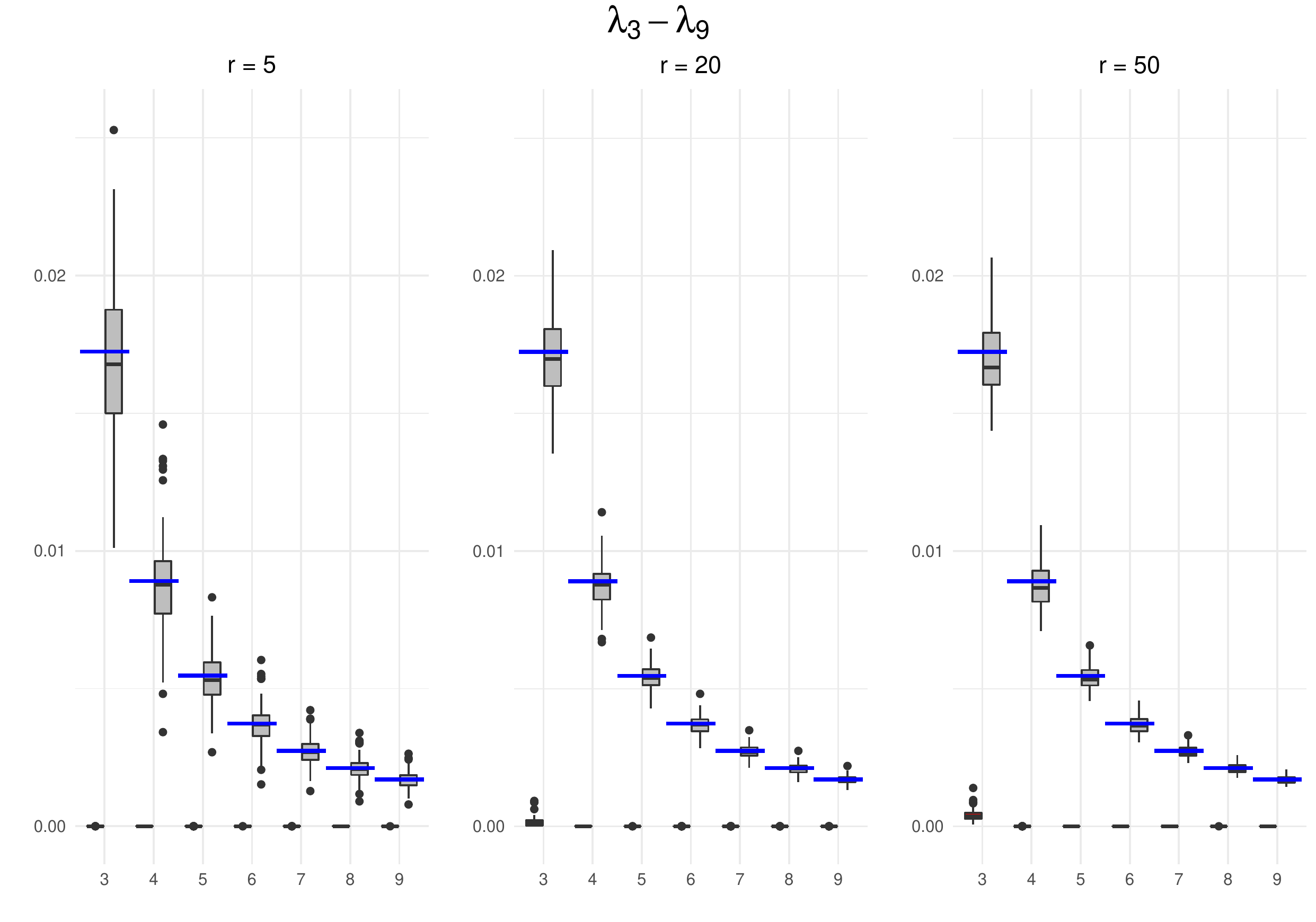} } \label{sim1:box:eigvals3-9}}%
    \quad
    \subfloat[\centering ]{{\includegraphics[height=.11\textheight]{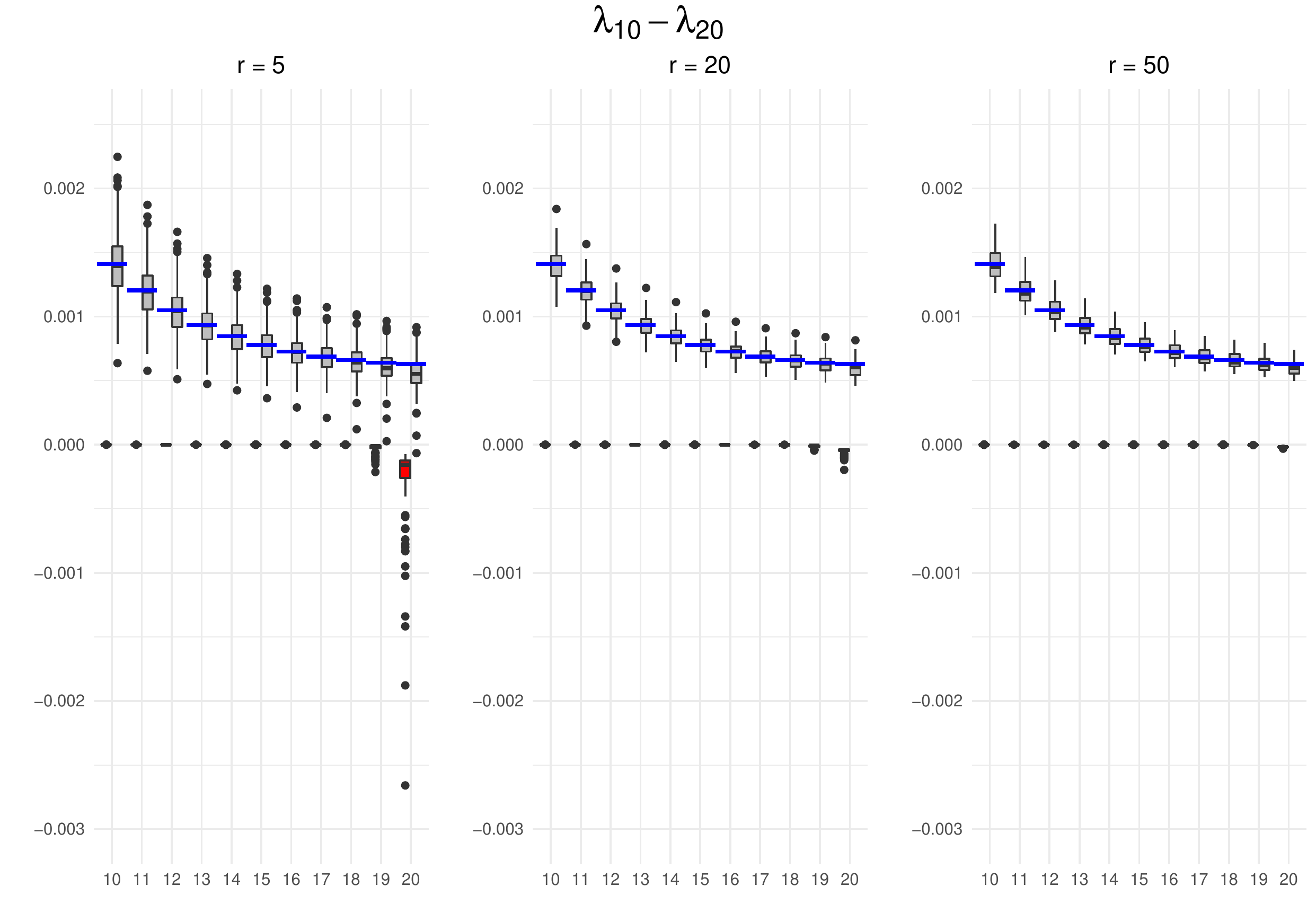} } \label{sim1:box:eigvals10-20}}%
    \caption{Example1: (a)  Box plots for $\widehat{\lambda}_1$ and $\widehat{\lambda}_2$. (b)  Box plots for $\widehat{\lambda}_3$ to $\widehat{\lambda}_9$.(c)  Box plots for $\widehat{\lambda}_{10}$ to $\widehat{\lambda}_{20}$.}%
    \label{sim2:surfaces}%
\end{figure}
%%% eigenfunctions
\begin{figure}[t!]%
    \centering
    \subfloat[\centering ]{{\includegraphics[height=.12\textheight]{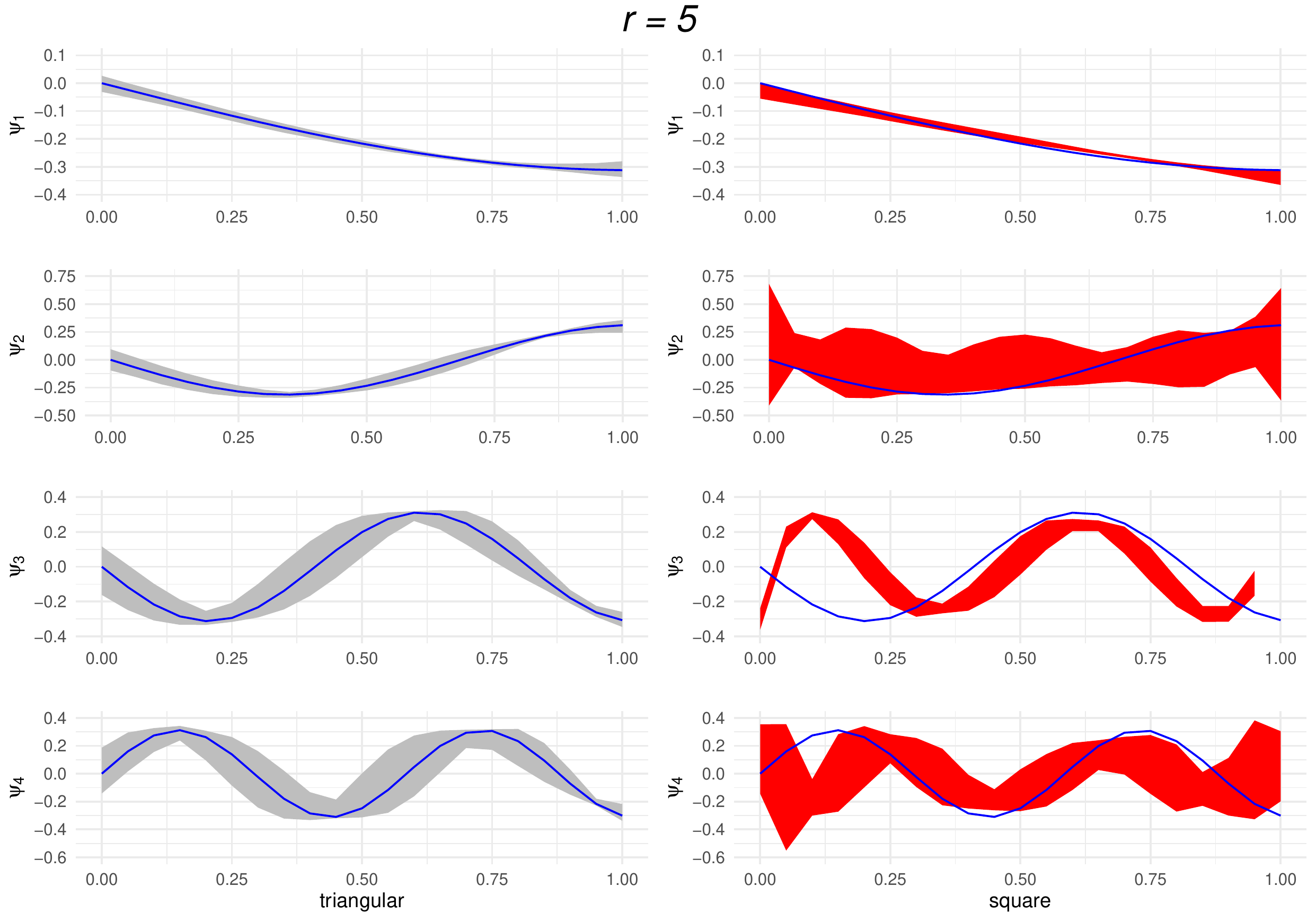} }
  % \label{sim2:box:eigvals1-2}
   }%
    \qquad
    \subfloat[\centering ]{{\includegraphics[height=.12\textheight]{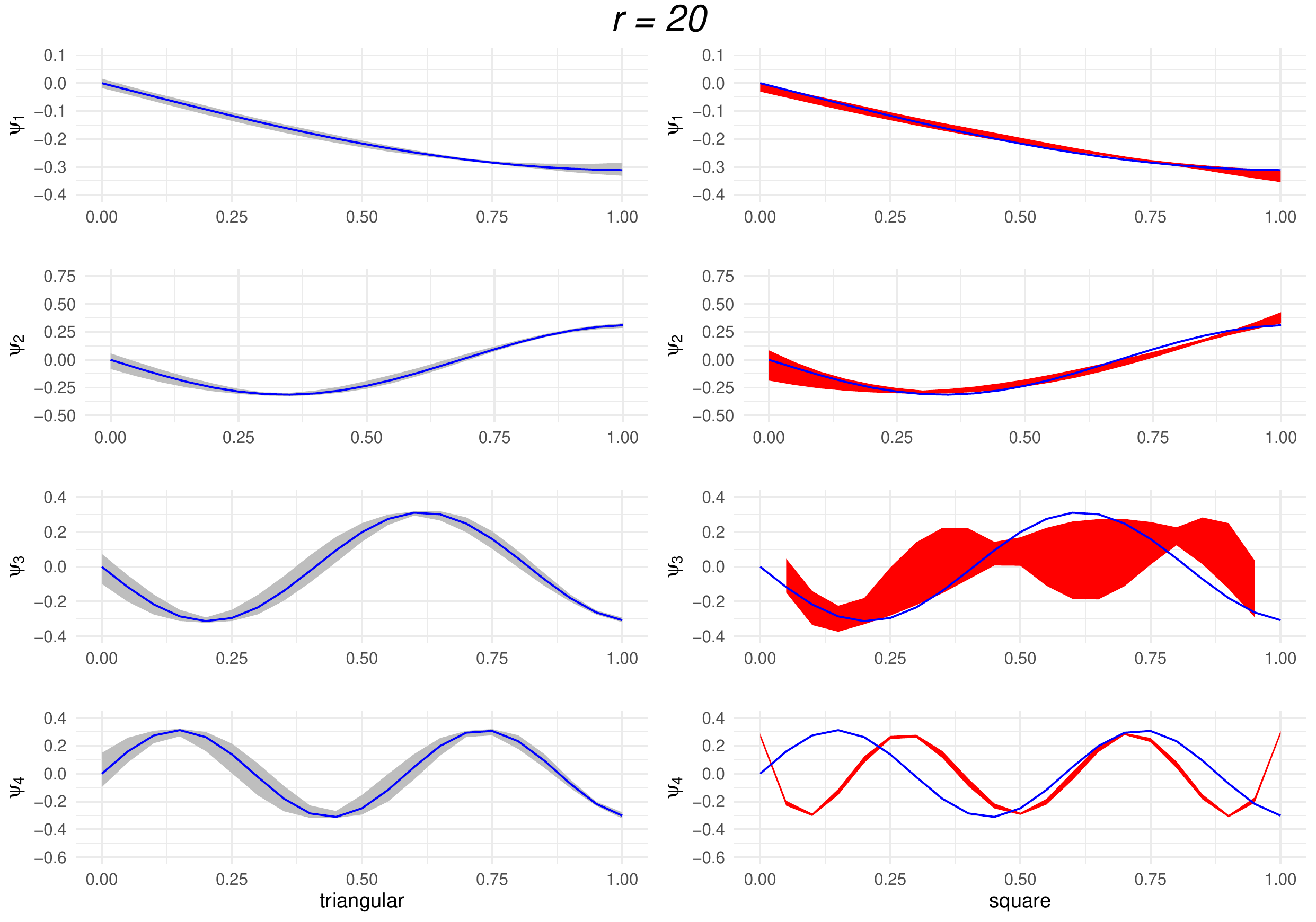} } %\label{sim2:box:eigvals3-9}
    }%
    \quad
    \subfloat[\centering ]{{\includegraphics[height=.12\textheight]{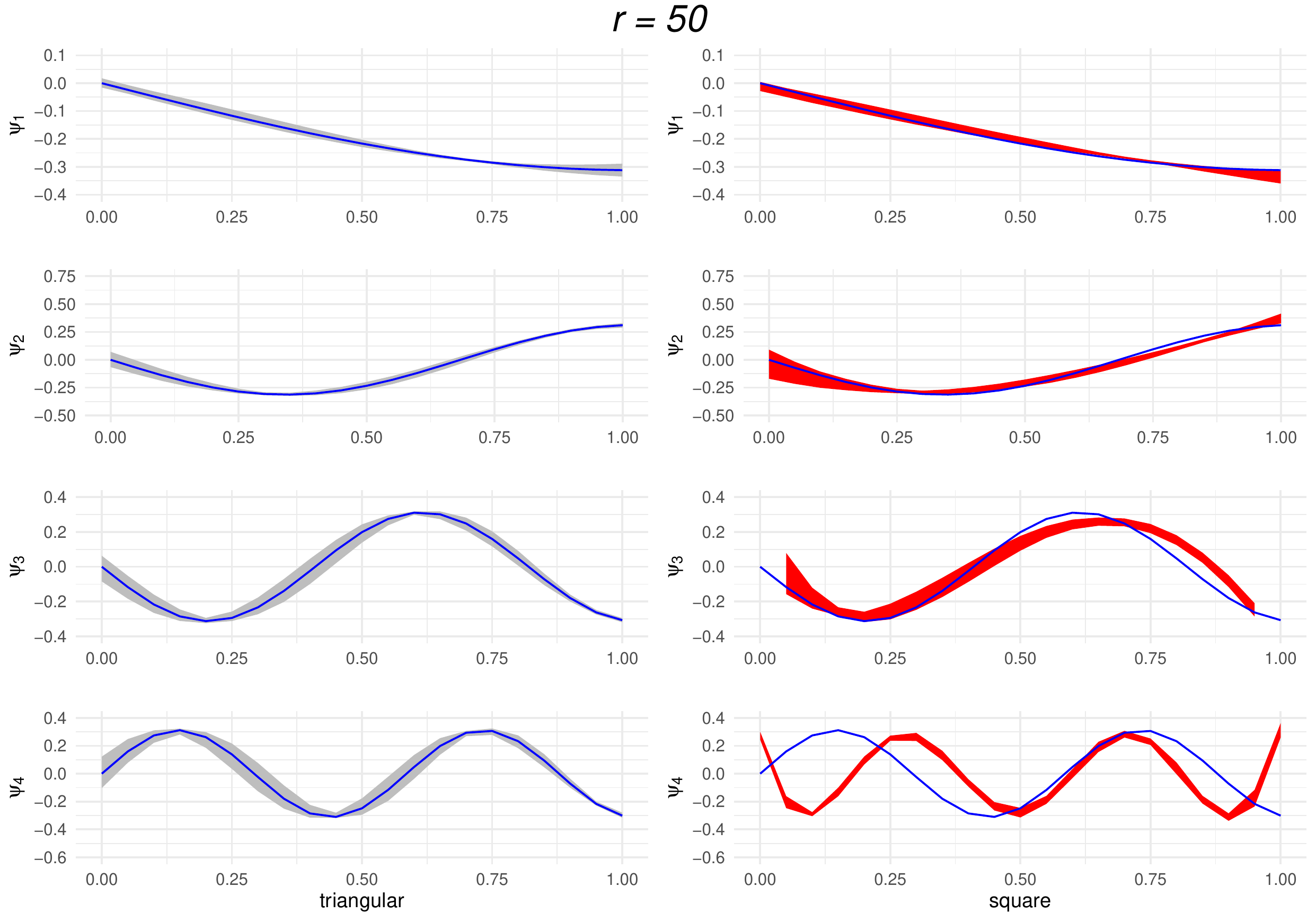}} 
    \label{sim1:eigvecs}
    }%
    \caption{Example 1, The first 4 eigenfunctions, $\Psi_1$ to $\Psi_4$, and point-wise $95\%$ confidence  envelopes based on the simulations. (a) $r=5$. (b) $r=20$. (c) $r=50$. }%
    \label{sim1:box:eigvecs}
\end{figure}

%####sim1:surfaces
   \begin{figure}[h!]%
    \centering
    \subfloat[\centering ]{{\includegraphics[height=.13\textheight]{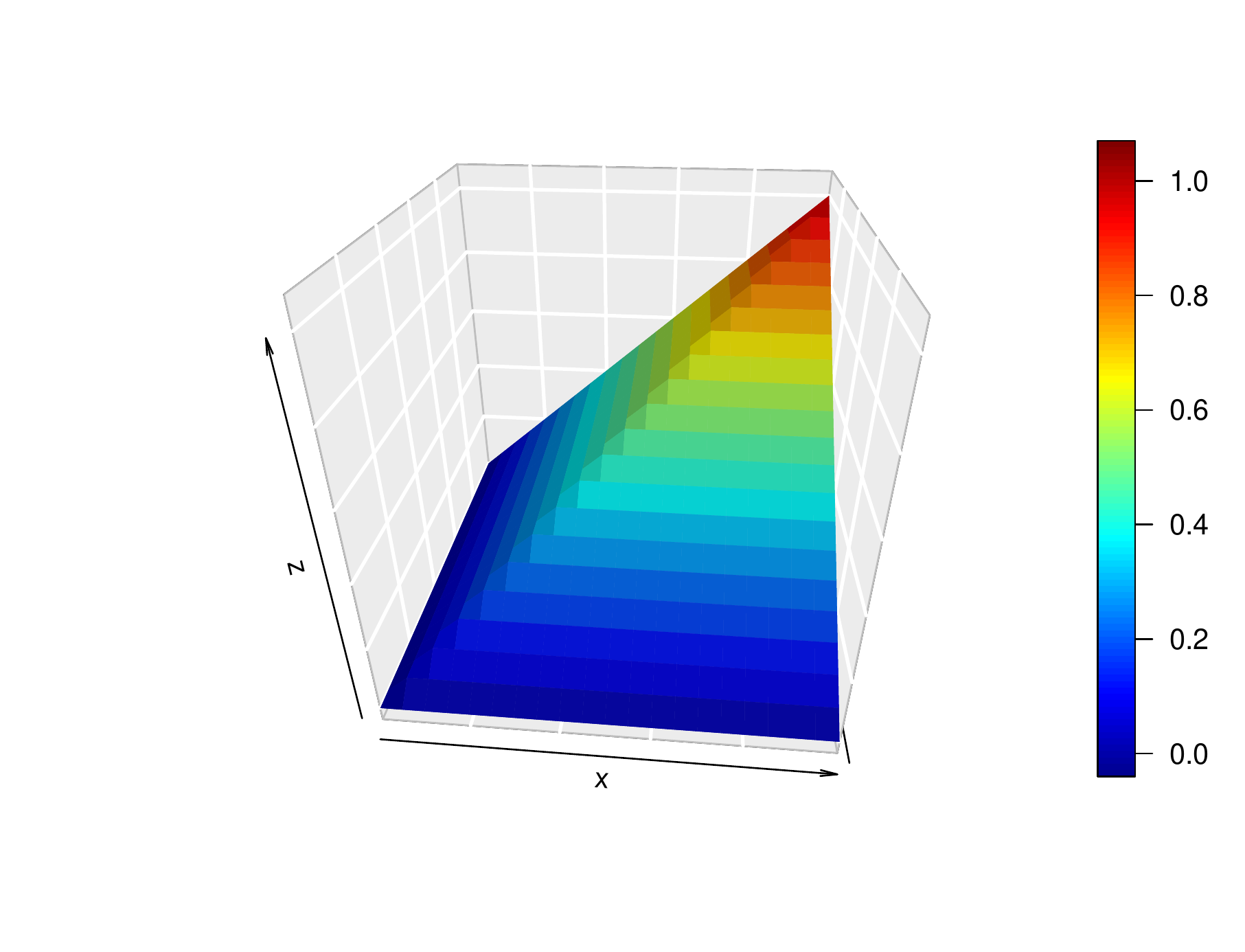} }}%
    \qquad
    \subfloat[\centering ]{{\includegraphics[height=.13\textheight]{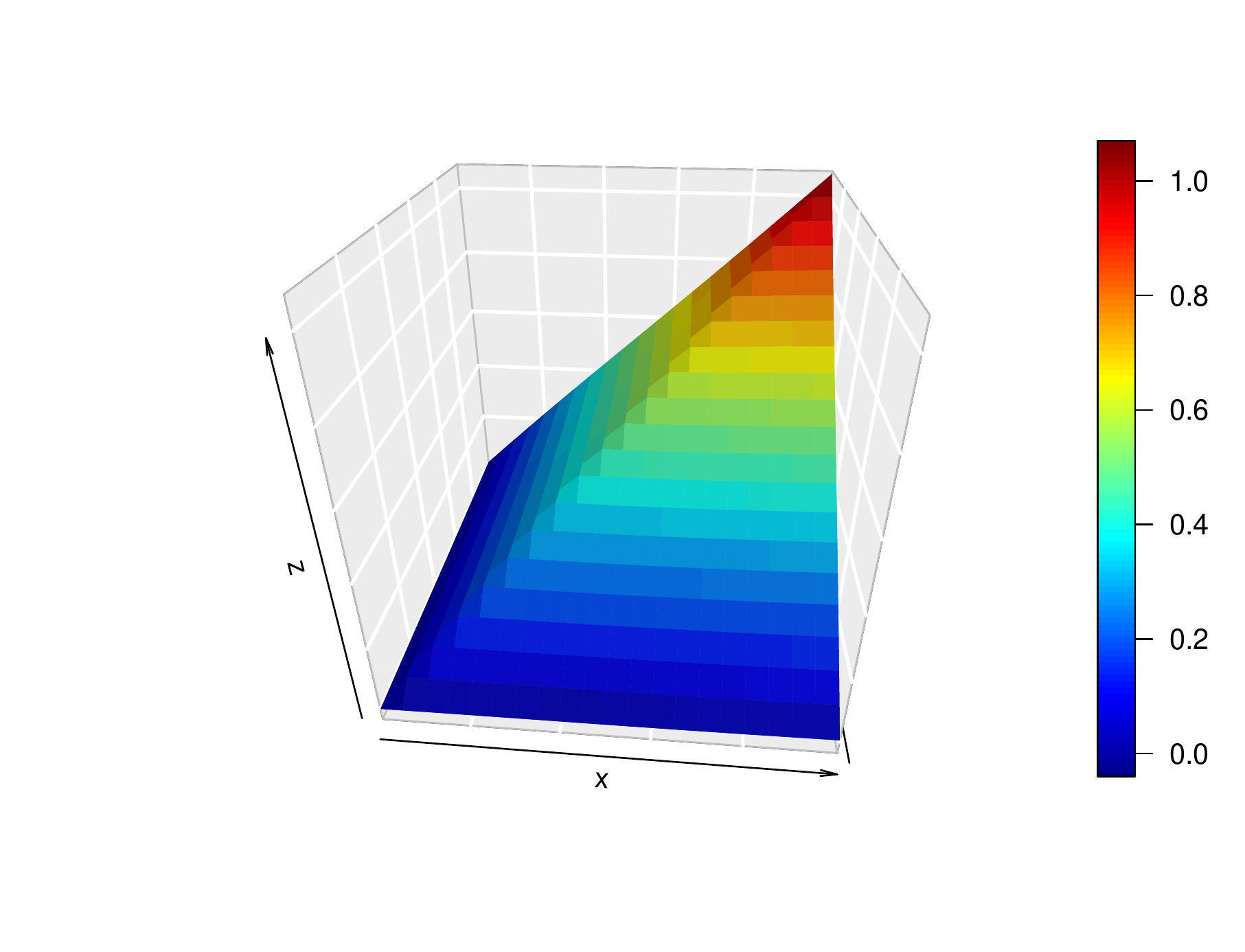} }}%
    \quad
    \subfloat[\centering ]{{\includegraphics[height=.13\textheight]{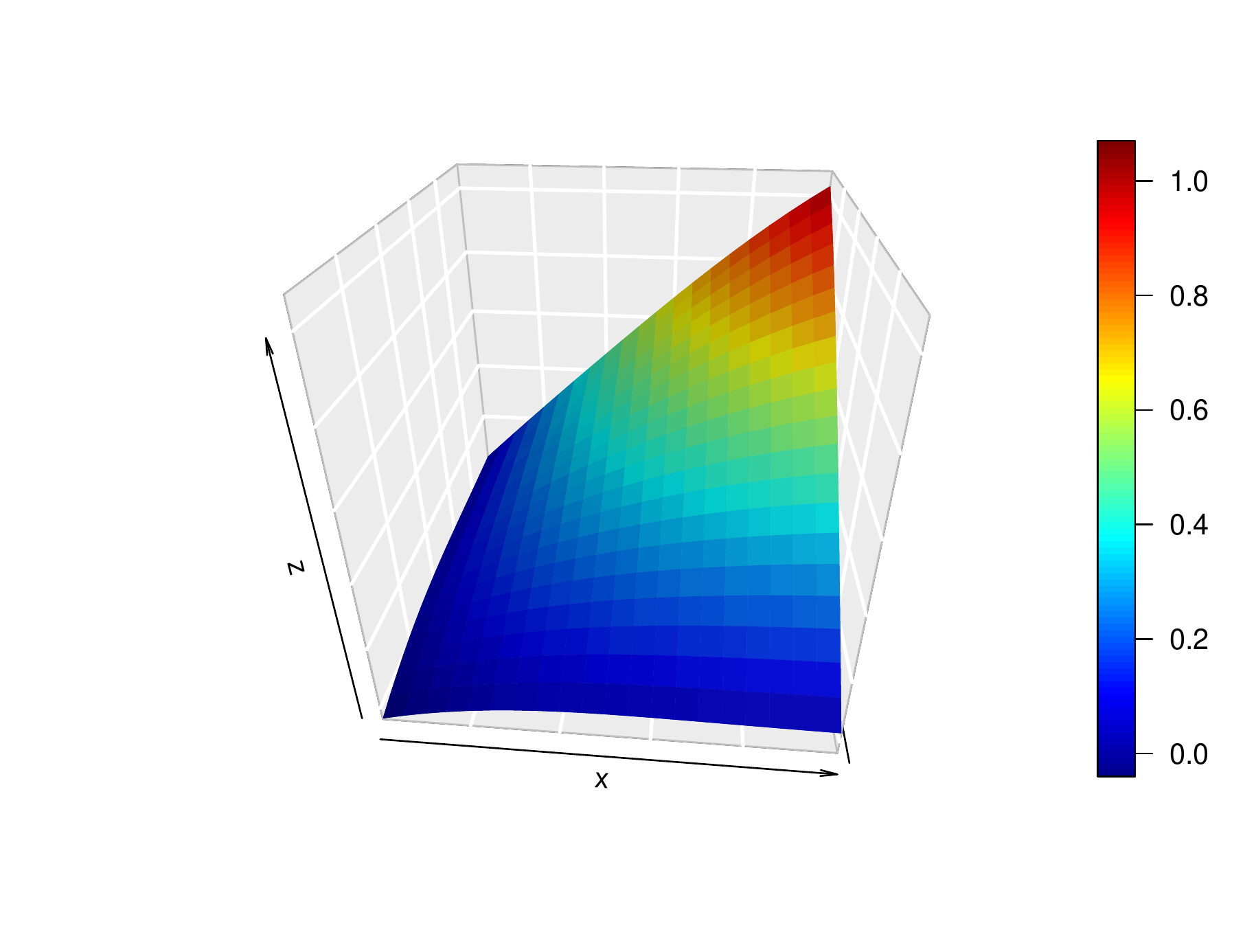} }}%
    \caption{(a): Covariance function of Brownian motion, $C(s,t) = \min(s,t)$. (b): $\widehat{C}$, based on one of the simulation runs, applying the proposed  {reflected triangle}  approach. (c): $\widehat{C}$, based on  the same simulation run as part (b), applying the squared domain approach.}%
    \label{sim1:surfaces}%
\end{figure}
%\newpage

\noindent \textbf{Example 2: } Next, we consider a process $X(s) = \mu(s) + Z_1 g_1(s) + Z_2 g_2(s) + Z_3 g_3(s)$, where $\mu(s) = 5(s-0.6)^2$, $Z_j\stackrel{iid}{\sim}N(0, 0.2)$,  $g_1(s) = 1$ is a constant function, $g_2(s) = \mathrm{sin}(2\pi s)$, and $g_3$ is chosen to be (a multiple of) four-fold convolution of the constant function $1$ over $\left[ 0 , \frac{1}{4} \right]$ \textit{i.e.}
\begin{eqnarray}
\hspace{1cm}
g_3(s)= 
\left\{
\begin{array}{ll}
 s^3    &\mathrm{for} \;0 \leq s < \frac{1}{4},  \\
  s^3 - 4  \left(s - \frac{1}{4}\right)^3     & \mathrm{for} \;\frac{1}{4} \leq s < \frac{2}{4}, \\
   s^3 - 4  \left(s - \frac{1}{4}\right)^3 + 6\left(s - \frac{2}{4}\right)^3   & \mathrm{for} \;\frac{2}{4} \leq s < \frac{3}{4}, \\
   s^3 - 4  \left(s - \frac{1}{4}\right)^3 + 6\left(s - \frac{2}{4}\right)^3 - 4  \left(s - \frac{3}{4}\right)^3   & \mathrm{for} \;\frac{3}{4} \leq s \leq 1. 
\end{array}
\right.
\end{eqnarray}
The function $g_3(s)$ is constructed to have a shape similar to that of $\mathrm{cos}(2 \pi s)$. %and hence yield easy computations to obtain the eigenpairs of the process. 
Consequently, this simulation scenario yields a setting similar to that of Simulation 1 in \citet{li_uniform_2010} with the difference that, clearly, the covariance kernel here is of class $\mathcal{C}^2$ but not $\mathcal{C}^3$ on $\bigtriangleup$ (or the unit square). The point of this modification is to see the net effect of going from a $\mathcal{C}^0$ process to a $\mathcal{C}^2$ process (rather than from $\mathcal{C}^0$ to $\mathcal{C}^{\infty}$, if we had picked $\mathrm{cos}(2 \pi s)$ instead of $g_3(s)$).  We take $n=200$, $r\in\{5,20,50\}$, and run $100$ replications in each case.  The simulation results are summarised in Figures \ref{sim2:box:HS_dist}-\ref{sim2:eigvecs}. We use the same colour coding as before: blue for the target, grey/black for the triangular method, and red for the square method. Interestingly, the triangular method performs slightly better than the square method when it comes to the covariance as  a whole (Figure \ref{sim2:box:HS_dist}). The two approaches perform very similarly in estimating the leading eigenfunctions (Figure \ref{sim2:box:eigvecs}). When it comes to the eigenvalues, we see an interesting pattern: both methods perform comparably in the the non zero eigenvalues (up to order 3).  The square method does a clearly better job at capturing the fact that the remaining eigenvalues are zero (except when going to higher orders where both methods start having trouble). See Figures \ref{sim2:box:eigvals1-2}-\ref{sim2:box:eigvals10-20}. 

%%%%%%%% figures
%% Hs , sigma hat , eigvals
\begin{figure}[t!]
   \centering
    \subfloat[\centering ]{{\includegraphics[height=.11\textheight]{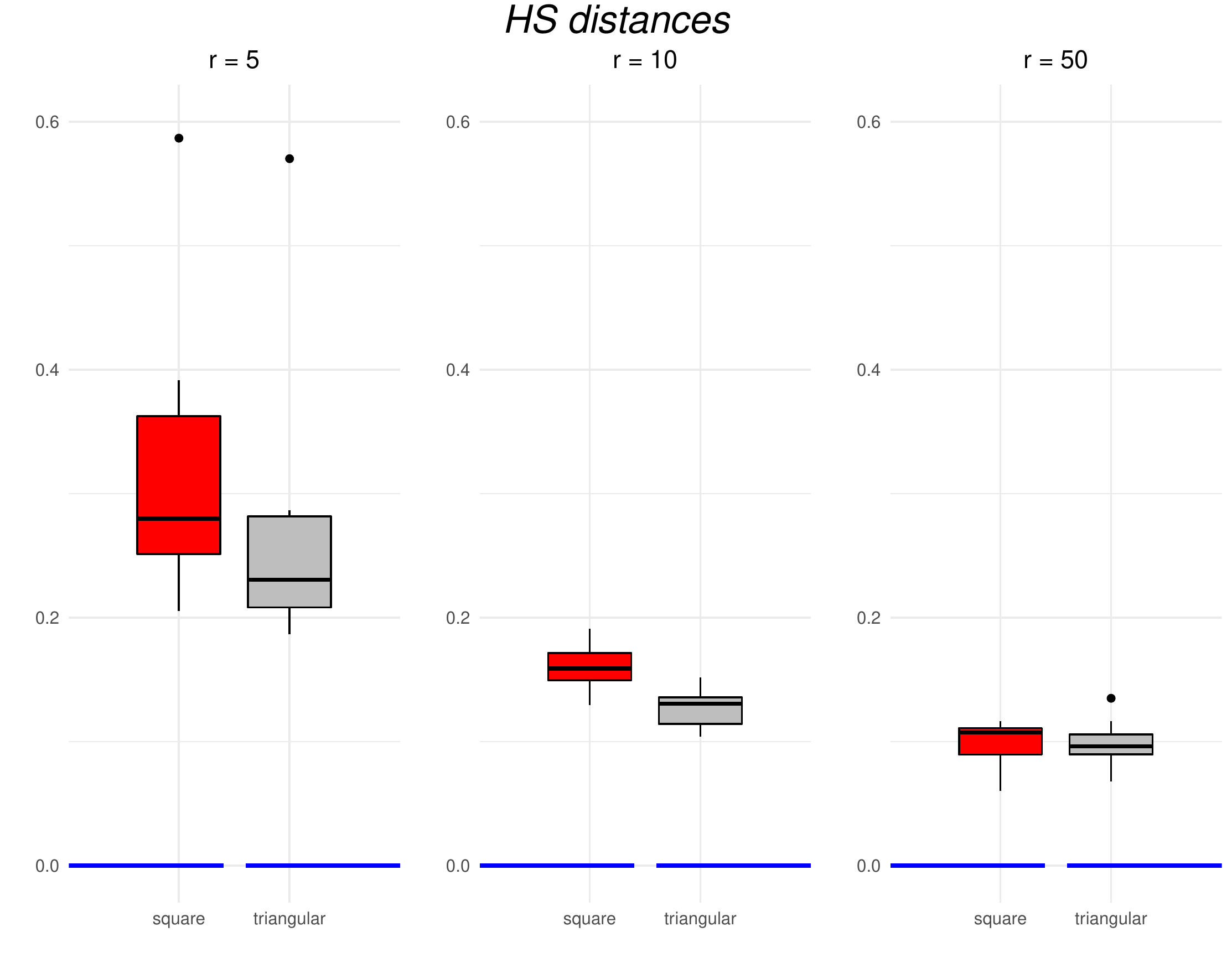}}\label{sim2:box:HS_dist}}
  %\begin{minipage}[b]{0.3\textwidth}     \includegraphics[width=\textwidth]{sim2_HS_dist.pdf}     \caption{Simulation 2: Box plots for $\Vert C - \widehat{C} \Vert_{HS}$.} \label{sim2:box:HS_dist}   \end{minipage}   \hfill
  \qquad
  \centering
    \subfloat[\centering ]{{\includegraphics[height=.11\textheight]{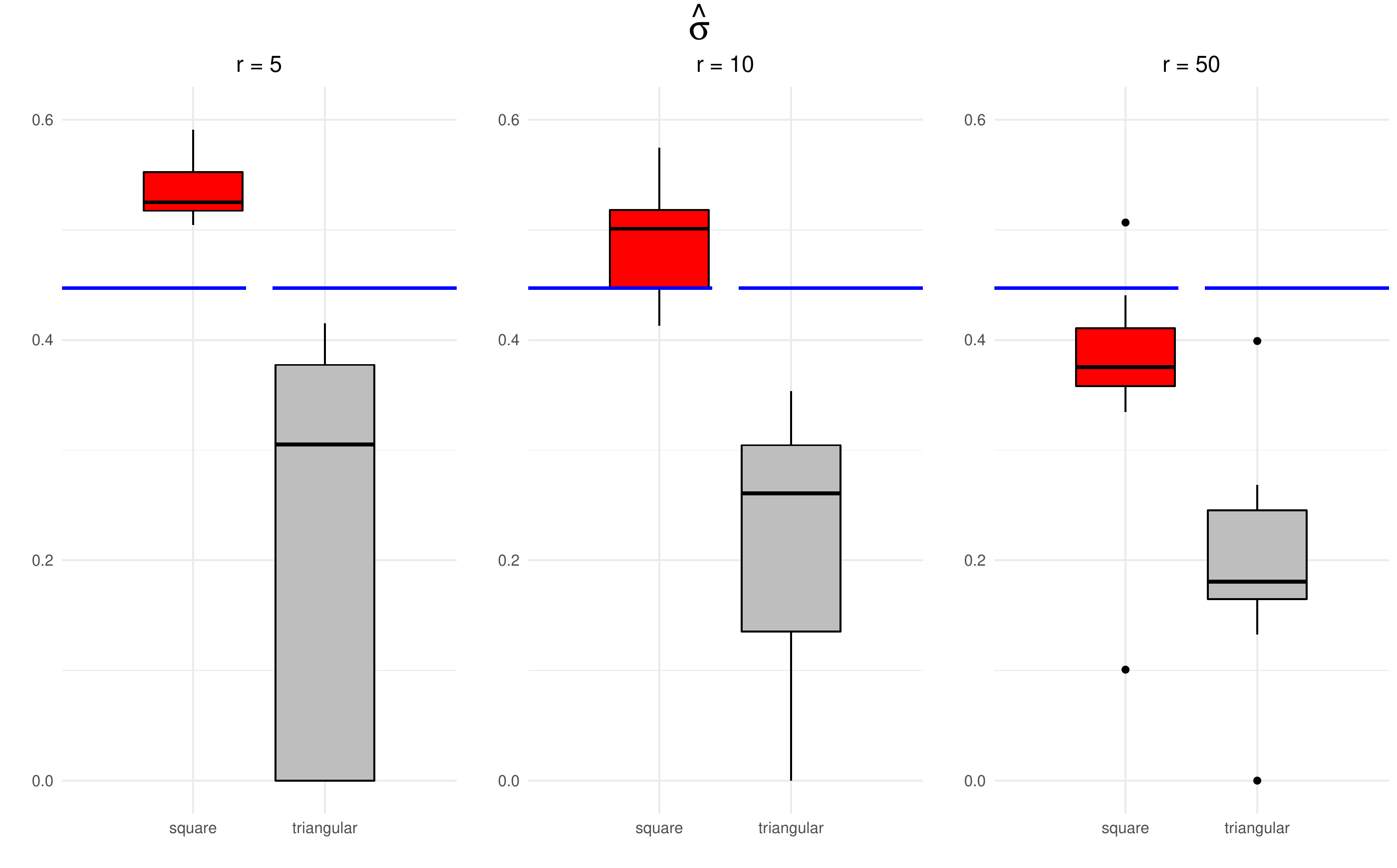}}\label{sim2:box:std_err_hat}}
  %\begin{minipage}[b]{0.3\textwidth}     \includegraphics[width=\textwidth]{sim2_std_err_hat.pdf}     \caption{Simulation 2: Box plots for $\widehat{\sigma}$. } \label{sim2:box:std_err_hat}   \end{minipage}    \hfill
    \qquad
  \centering
    \subfloat[\centering ]{{\includegraphics[height=.11\textheight]{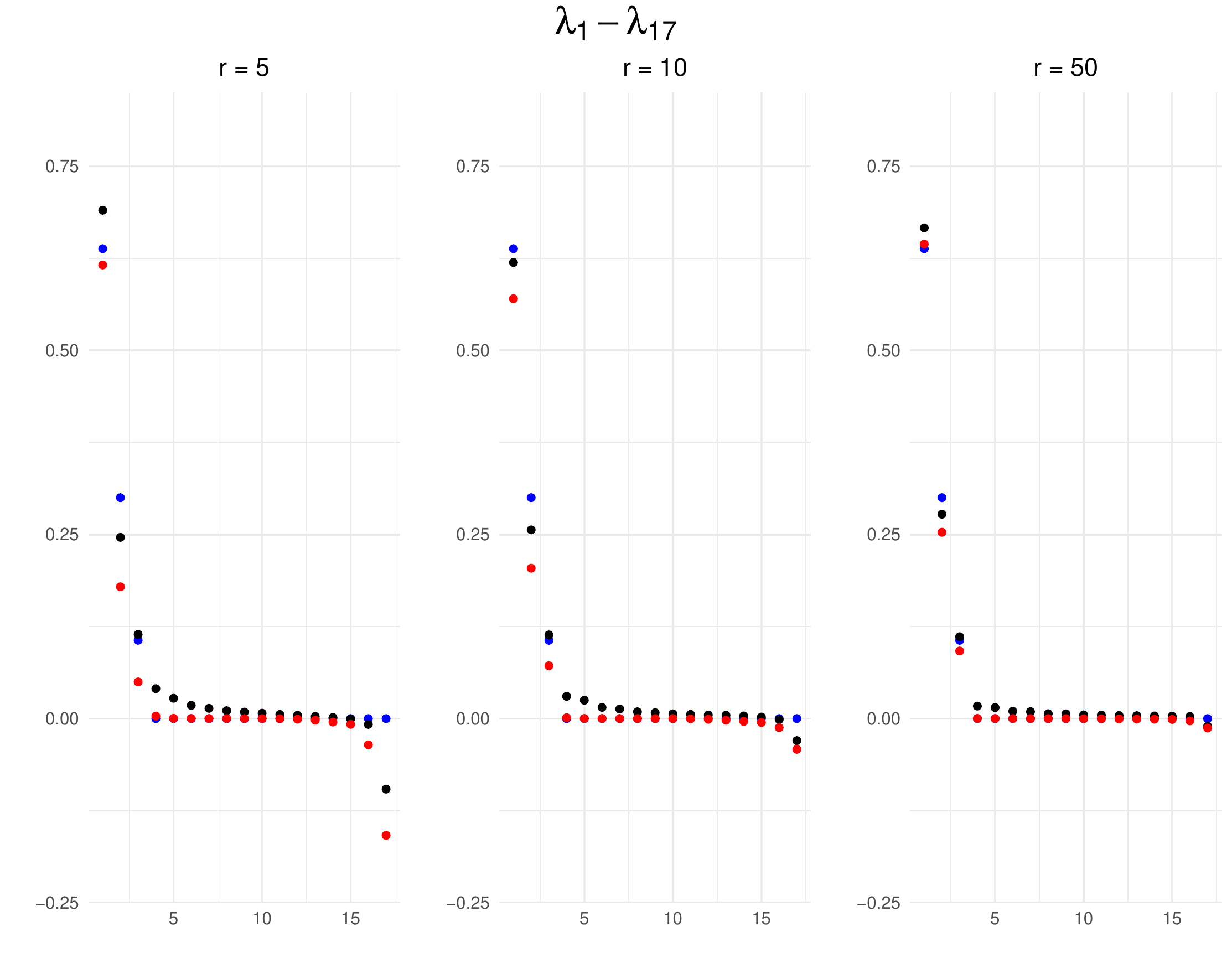}}\label{sim2:eigvals}}
  \caption{Example 2: (a) Box plots for $\Vert C - \widehat{C} \Vert_{HS}$. (b)  Box plots for $\widehat{\sigma}$. (c)   The first 17 eigenvalues, $\lambda_1$ to $ \lambda_{17}$,  and the average of their estimates over the simulation runs. }
    %\begin{minipage}[b]{0.3\textwidth}     \includegraphics[width=\textwidth]{sim2_eigvals.pdf}     \caption{Simulation 2: The first 20 eigenvalues, $\lambda_1$ to $ \lambda_{17}$,  and the average of their estimates over the simulation runs.} \label{sim2:eigvals}   \end{minipage}
  %\caption{first panel from left ...}
   \end{figure}
%% eigvals'  boxplots
\begin{figure}[t!]%
    \centering
    \subfloat[\centering ]{{\includegraphics[height=.11\textheight]{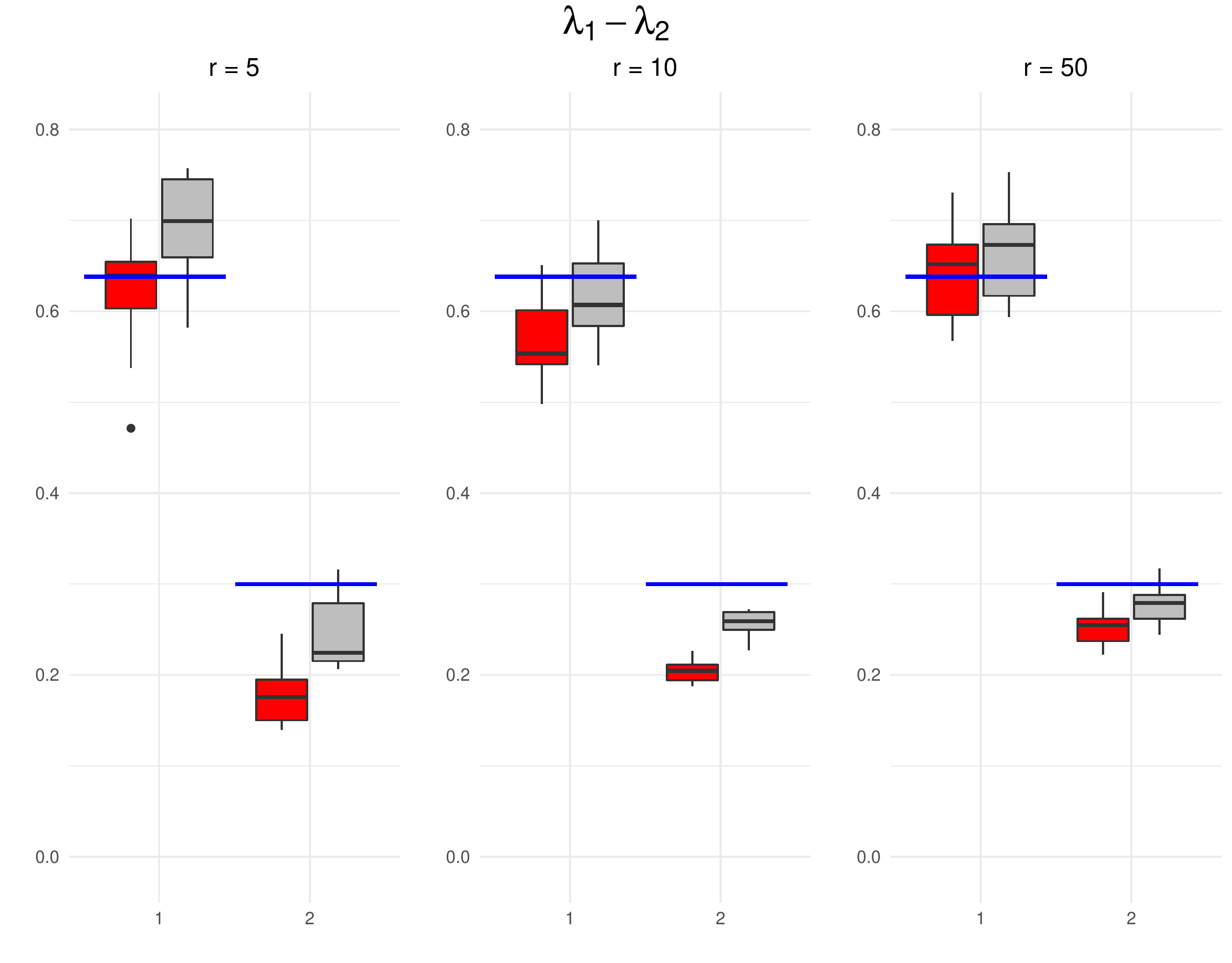} }
   \label{sim2:box:eigvals1-2}}%
    \qquad
    \subfloat[\centering ]{{\includegraphics[height=.11\textheight]{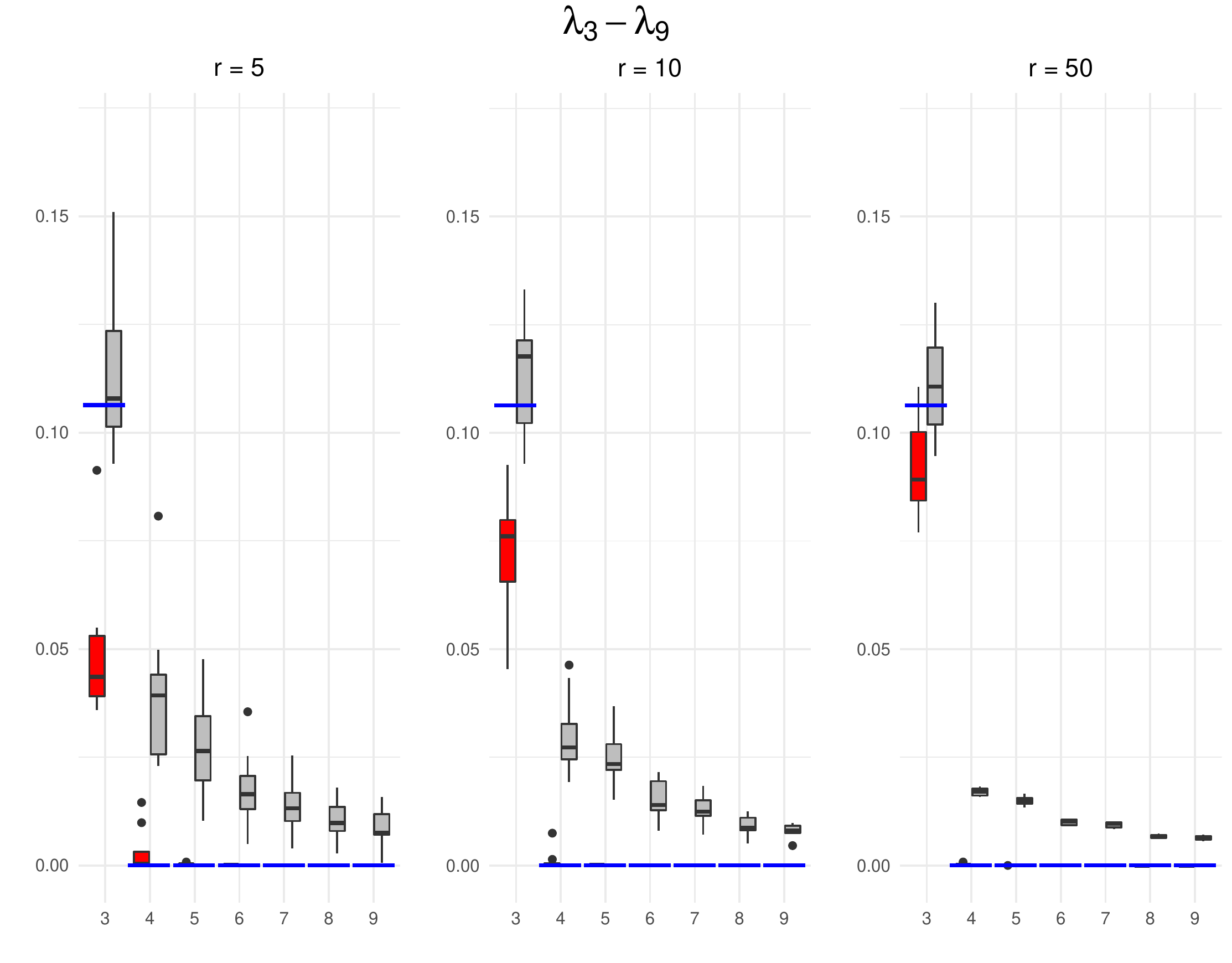} } \label{sim2:box:eigvals3-9}}%
    \quad
    \subfloat[\centering ]{{\includegraphics[height=.11\textheight]{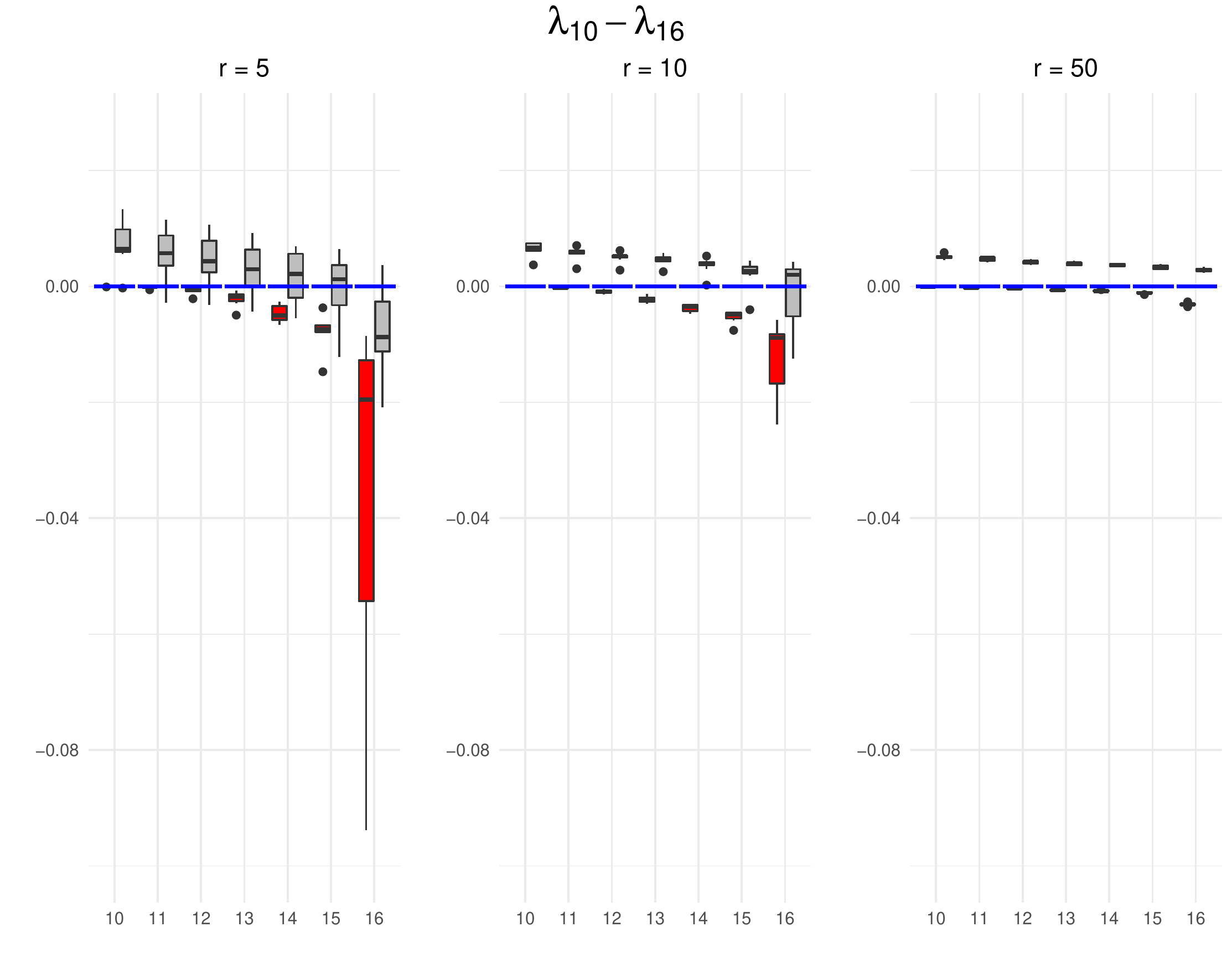} } \label{sim2:box:eigvals10-20}}%
    \caption{Example 2: (a)  Box plots for $\widehat{\lambda}_1$ and $\widehat{\lambda}_2$. (b)  Box plots for $\widehat{\lambda}_3$ to $\widehat{\lambda}_9$.(c)  Box plots for $\widehat{\lambda}_{10}$ to $\widehat{\lambda}_{16}$.}%
    \label{sim2:surfaces}%
\end{figure}

%%% eigenfunctions
\begin{figure}[t!]%
    \centering
    \subfloat[\centering ]{{\includegraphics[height=.13\textheight]{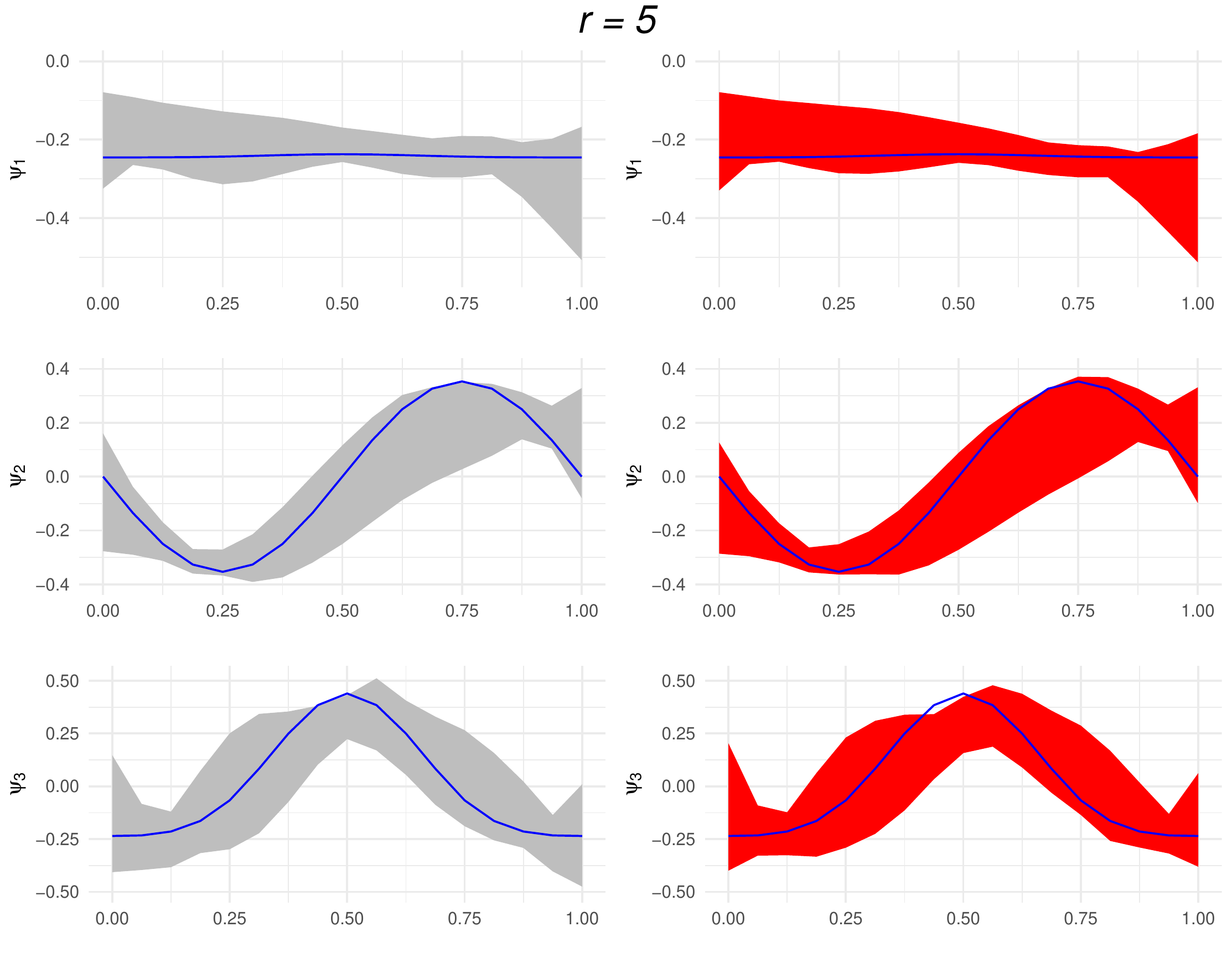} }
  % \label{sim2:box:eigvals1-2}
   }%
    \qquad
    \subfloat[\centering ]{{\includegraphics[height=.13\textheight]{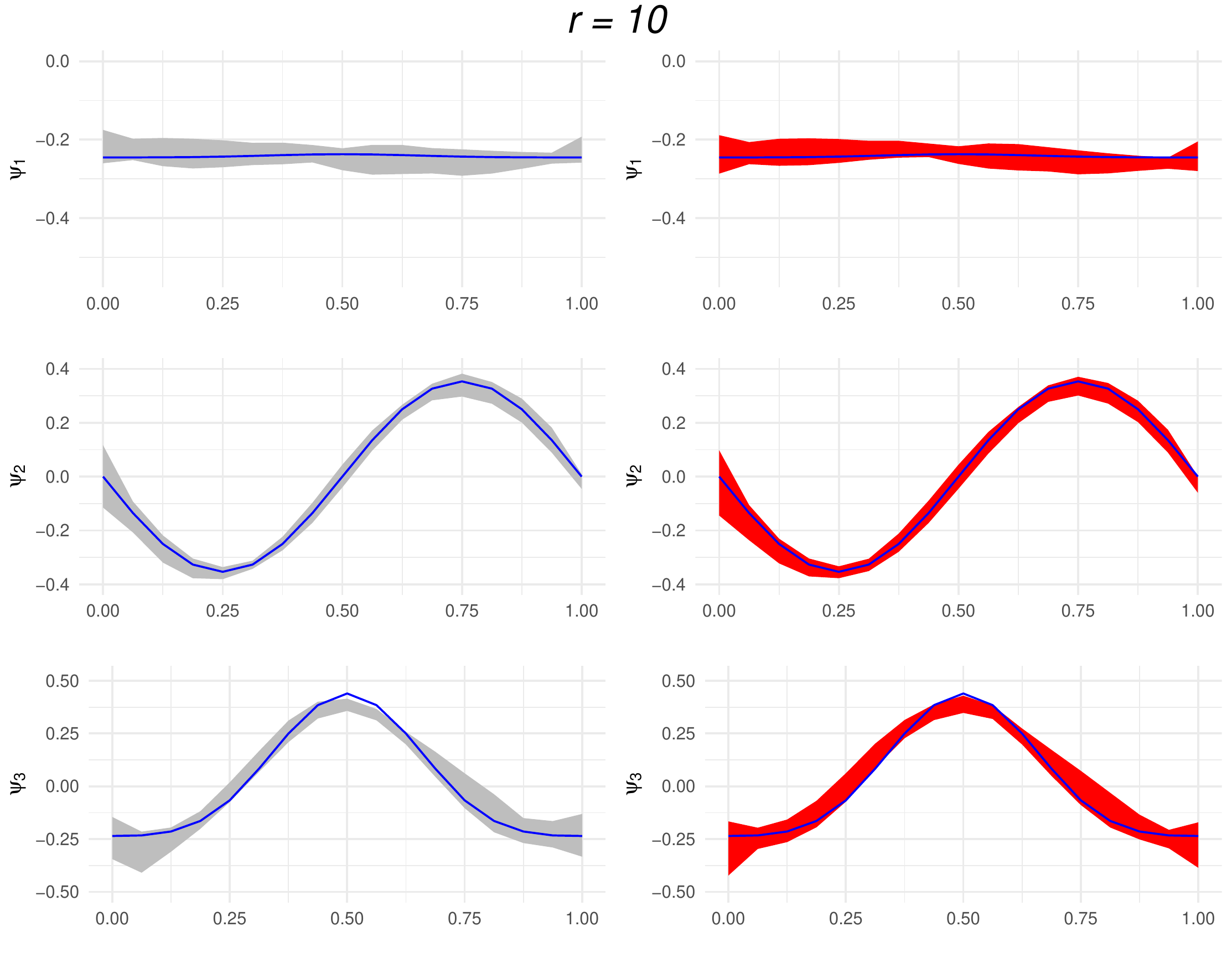} } %\label{sim2:box:eigvals3-9}
    }%
    \quad
    \subfloat[\centering ]{{\includegraphics[height=.13\textheight]{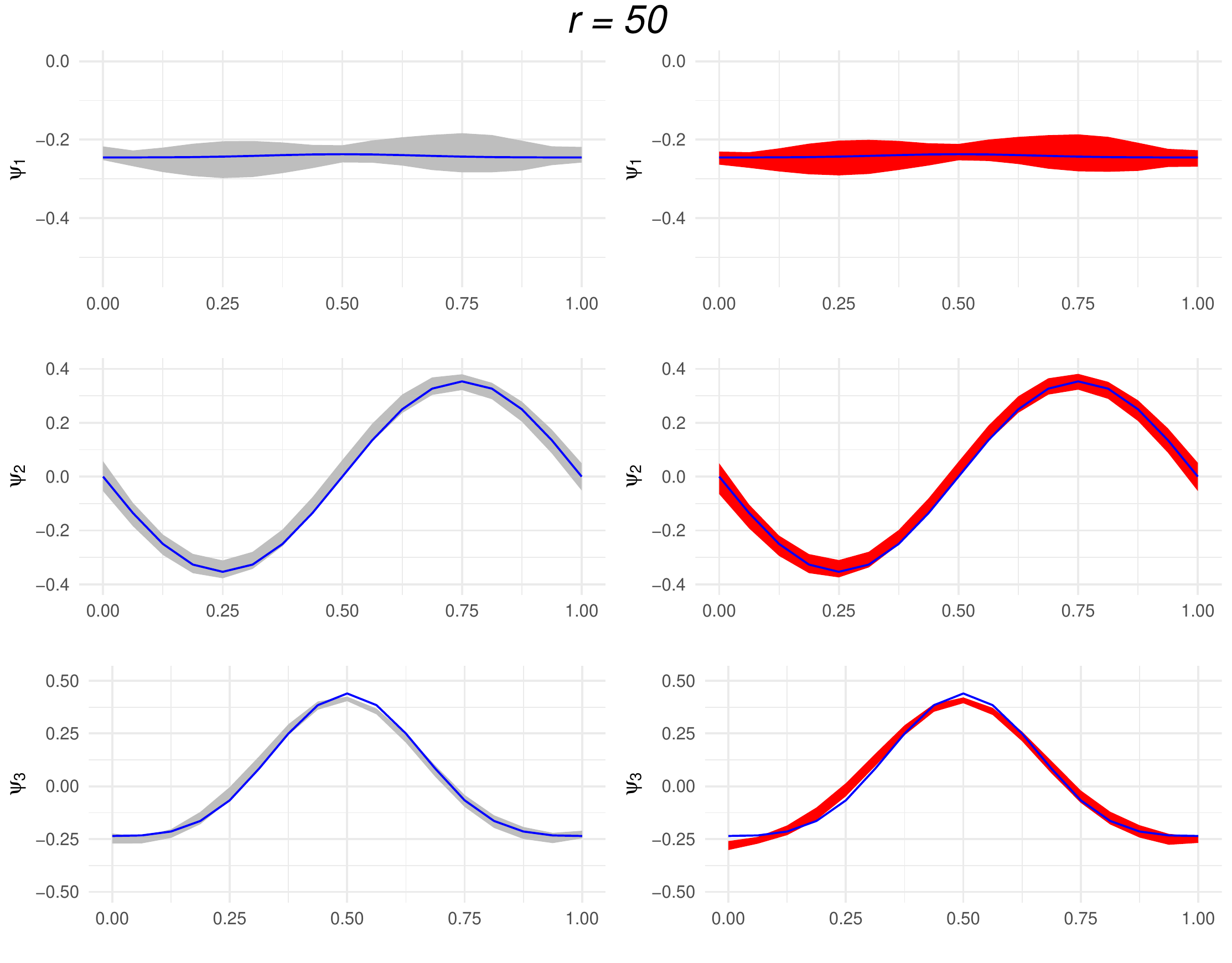}} 
    \label{sim2:eigvecs}
    }%
    \caption{Example 2, The first 4 eigenfunctions, $\Psi_1$ to $\Psi_4$, and point-wise $95\%$ confidence  envelopes based on the simulations. (a) $r=5$. (b) $r=10$. (c) $r=50$. }%
    \label{sim2:box:eigvecs}%
\end{figure}

%####sim2:surfaces
   \begin{figure}[t!]%
    \centering
    \subfloat[\centering ]{{\includegraphics[height=.13\textheight]{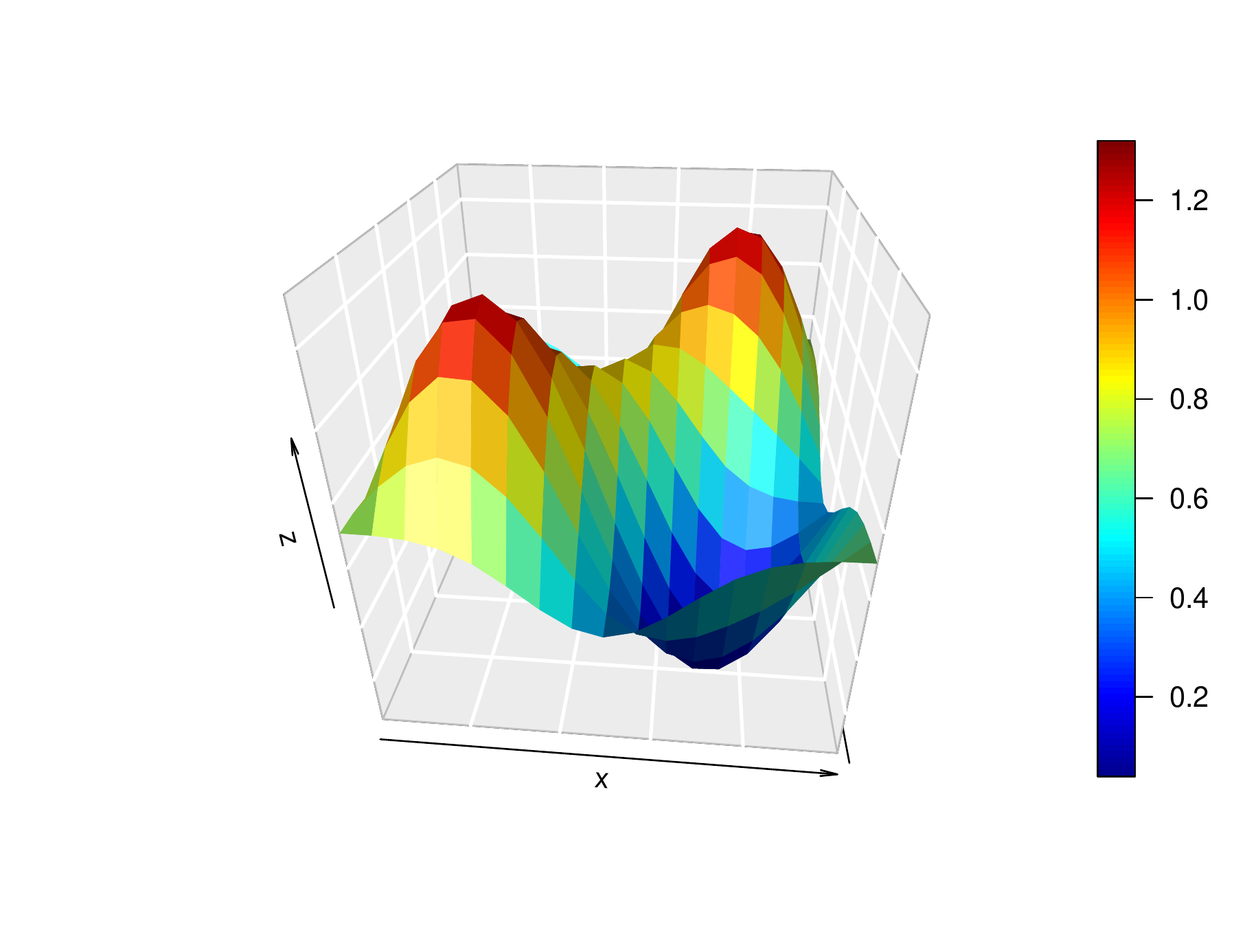} }
    \label{sim2:coc:true}}%
    \qquad
    \subfloat[\centering ]{{\includegraphics[height=.13\textheight]{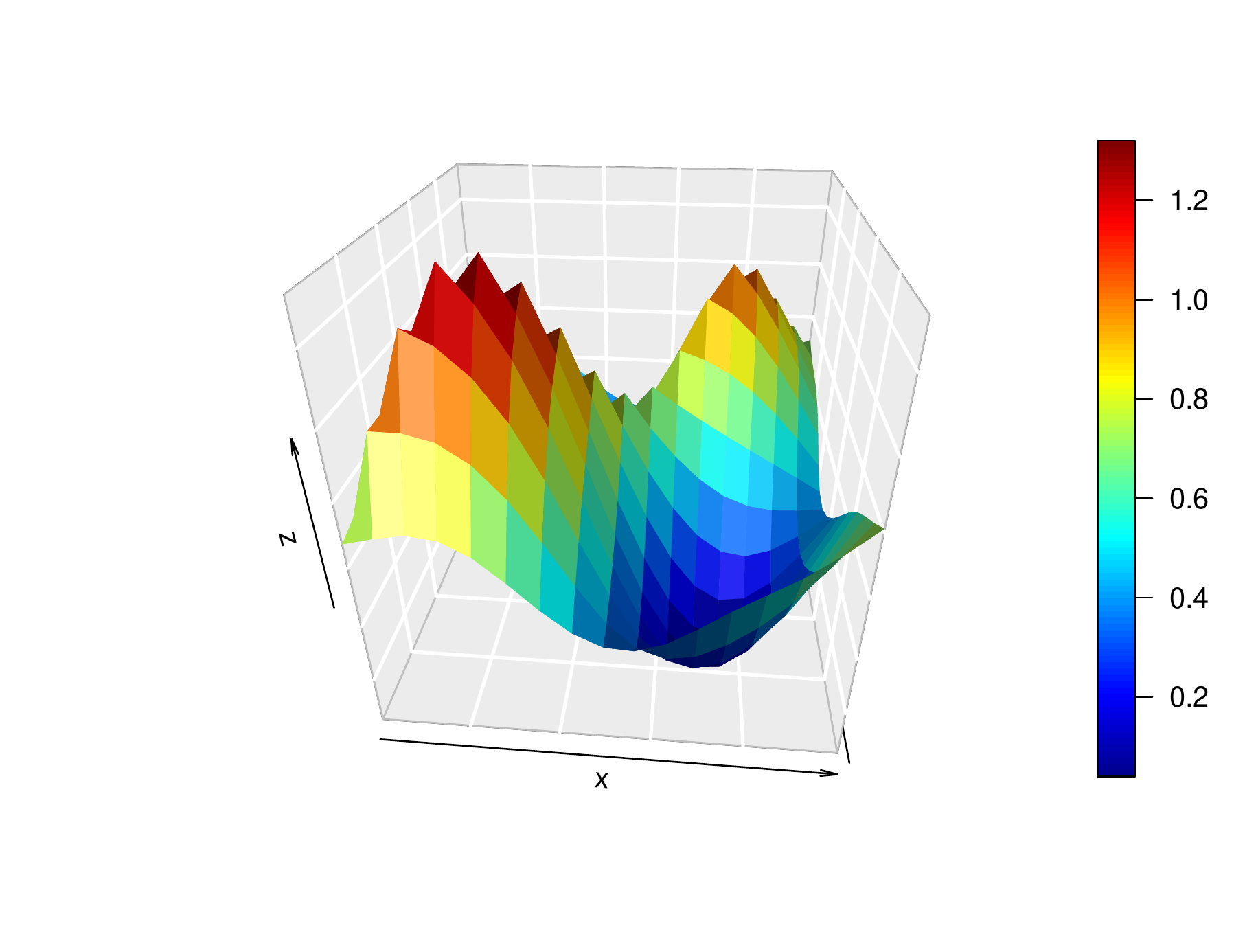} } \label{sim2:coc:triangle}}%
    \quad
    \subfloat[\centering ]{{\includegraphics[height=.13\textheight]{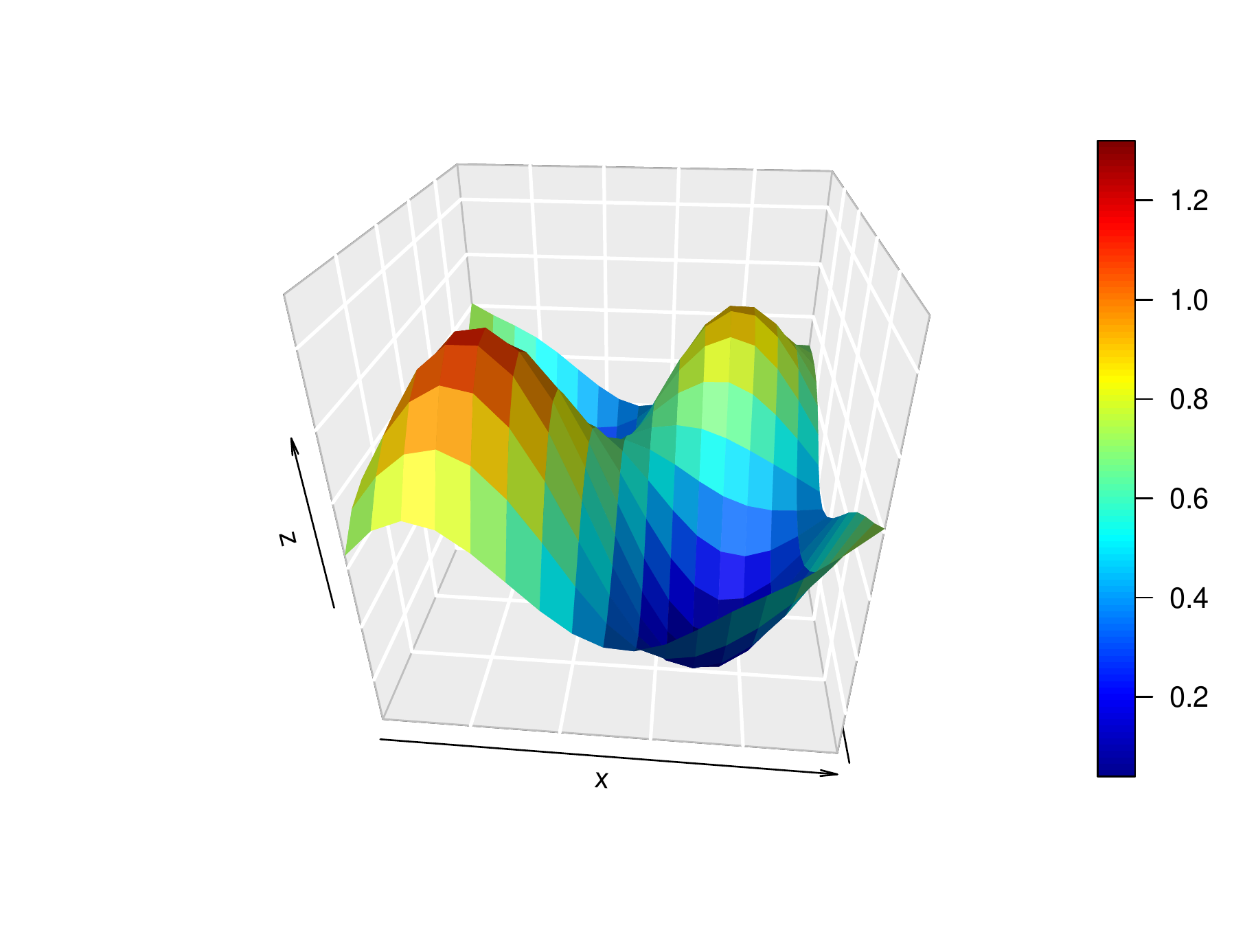} } \label{sim2:coc:square}}%
    \caption{(a): Covariance function of Example 2. (b): $\widehat{C}$, based on one of the simulation runs, applying the proposed  {reflected triangle}  approach. (c): $\widehat{C}$, based on  the same simulation run as part (b), applying the squared domain approach.}%
    \label{sim2:surfaces}%
\end{figure}
\section{Appendix} \label{Appendix}

\subsection{Proofs.}

Simple calculations shows  that  $\widehat{G}(\cdot,\cdot)$ admits the following representation 
\begin{eqnarray}\label{Ghat=A*S}
\nonumber
\widehat{G}(s,t) &=&  e_{1}^{T}
\left[
\begin{array}{ccc}
\mathcal{A}_{0,0}  & \mathcal{A}_{1,0}  &  \mathcal{A}_{0,1}\\
 \mathcal{A}_{1,0} & \mathcal{A}_{2,0}  & \mathcal{A}_{1,1} \\
 \mathcal{A}_{0,1} & \mathcal{A}_{1,1}  & \mathcal{A}_{0,2}
\end{array} 
\right]^{-1}
\left[
\begin{array}{c}
    S_{0,0}   \\
    S_{1,0}  \\
     S_{0,1} 
\end{array}
\right] 
\end{eqnarray}
which entails
\begin{align}
\widehat{G}(s,t) &=& \frac{1}{\mathcal{D}(s,t)}
\left[
\begin{array}{ccc}
 \mathcal{A}_{2,0}\mathcal{A}_{0,2}-\mathcal{A}_{1,1}^2    & - \mathcal{A}_{1,0}\mathcal{A}_{0,2}+\mathcal{A}_{0,1}\mathcal{A}_{1,1} &  \mathcal{A}_{1,0}\mathcal{A}_{1,1}-\mathcal{A}_{0,1}\mathcal{A}_{2,0}
\end{array}
\right] \left[
\begin{array}{c}
    S_{0,0}   \\
    S_{1,0}  \\
     S_{0,1} 
\end{array}
\right]
,
\end{align}
where 
\begin{eqnarray*}
 \mathcal{A}_{p,q} =\frac{1}{nh^{2}_G} \sum_{i=1}^{n} \frac{2}{r(r-1)}\sum_{1\leq k < j \leq r} W\left( \frac{T_{ij}-s}{h_G}\right) W\left( \frac{T_{ik}-t}{h_G}\right) \left( T_{ij}-s \right)^{p} \left( T_{ik}-t \right)^{q} ,\;\;\; p,q = 0,1,2,
\end{eqnarray*}
 \begin{eqnarray}
%\label{S_00}
\nonumber
 S_{0,0} &=& \frac{1}{nh^2_G}\sum_{i=1}^{n}\frac{2}{r(r-1)}\sum_{1\leq k<j \leq r} \left[ W\left( \frac{T_{ij}-s}{h_G}\right) W\left( \frac{T_{ik}-t}{h_G}\right)Y_{ij}Y_{ik}\right],\\ \nonumber
 S_{0,1} &=&  \frac{1}{nh^2_G}\sum_{i=1}^{n}\frac{2}{r(r-1)}\sum_{1\leq k<j \leq r} \left[ W\left( \frac{T_{ij}-s}{h_G}\right) W\left( \frac{T_{ik}-t}{h_G}\right)
  \left(T_{ik}-t\right)Y_{ij}Y_{ik}\right],
 \\ \nonumber
 S_{1,0} &=& \frac{1}{nh^2_G}\sum_{i=1}^{n}\frac{2}{r(r-1)}\sum_{1\leq k<j \leq r} \left[ W\left( \frac{T_{ij}-s}{h_G}\right) W\left( \frac{T_{ij}-s}{h_G}\right)
  \left(T_{ij}-s\right)Y_{ij}Y_{ik}\right],
\end{eqnarray}
and
\begin{eqnarray*}
\mathcal{D}(s,t) = \det \left( \left[
\begin{array}{ccc}
\mathcal{A}_{0,0}  & \mathcal{A}_{1,0}  &  \mathcal{A}_{0,1}\\
 \mathcal{A}_{1,0} & \mathcal{A}_{2,0}  & \mathcal{A}_{1,1} \\
 \mathcal{A}_{0,1} & \mathcal{A}_{1,1}  & \mathcal{A}_{0,2}
\end{array}
\right]\right).
\end{eqnarray*}
which yields 
\begin{align}\label{G hat decomposition}
\widehat{G}(s, t) 
 &=&   \frac{\mathcal{M}_1(s,t) S_{0,0}(s,t)-\mathcal{M}_2(s,t)S_{1,0}(s,t)-\mathcal{M}_3(s,t) S_{0,1}(s,t)}{\mathcal{D}(s,t)},\;\;\;\;0<t<s<1,
\end{align}
with $\mathcal{M}_p$, $p=1,2,3$, corresponding to the elements of row matrix appearing in \eqref{Ghat=A*S}.
\\
Likewise, for  $\widehat{\mu}(\cdot)$ we have
\begin{eqnarray}\label{mu hat decomposition}
 \widehat{\mu}(s) = \frac{S_0(s)\mathcal{A}_2(s)-S_1(s)\mathcal{A}_1(s)}{\mathcal{A}_0(s)\mathcal{A}_2(s)-\mathcal{A}_1^2(s)},\;\;\; 0 < s < 1,
\end{eqnarray}
where
\begin{eqnarray}
\nonumber
 S_0 &=& \frac{1}{nh_{\mu}}\sum_{i=1}^{n}\frac{1}{r} \sum_{j=1}^{r}W\left( \frac{T_{ij}-s}{h_{\mu}}\right)Y_{ij},
 \\ \nonumber
  S_1 &=& \frac{1}{nh_{\mu}^2}\sum_{i=1}^{n}\frac{1}{r}\sum_{j=1}^{r}W\left( \frac{T_{ij}-s}{h_{\mu}}\right)\left(T_{ij}-s\right)Y_{ij},
\end{eqnarray}
and 
\begin{eqnarray*}
\mathcal{A}_p &=&   \frac{1}{nh_{\mu}^{1+p}}\sum_{i=1}^{n}\frac{1}{r} \sum_{j=1}^{r}W\left( \frac{T_{ij}-s}{h_{\mu}}\right)\left( T_{ij}-s \right)^p, \;\;\; p=0,1,2.
\end{eqnarray*}
We will make use of these expressions in our proofs.
\begin{proof}[Proof of Theorem \ref{Thm:Ghat-G}] We can drop the index $n$ for sake of simplicity. 
According to \eqref{a_0} and observing that
\begin{eqnarray*}
%\label{G(.,.)}
 G(s,t) =  e_{1}^{T}\left(\mathbb{T}_{(s,t)}^{T}\mathbf{W}_{(s,t)}\mathbb{T}_{(s,t)}\right)^{-1}\left(\mathbb{T}_{(s,t)}^{T}\mathbf{W}_{(s,t)}\mathbb{T}_{(s,t)}\right)\begin{bmatrix}
         G(s,t)\\
         G^{(1,0)}(s,t)\\
         G^{(0,1)}(s,t)
         \end{bmatrix},
\end{eqnarray*}
we obtain
\begin{align}\label{Ghat - G : decomposition}
 \underset{0 \leq t \leq s \leq 1}{\mathrm{sup}}\left|\widehat{G}(s, t) - G(s, t)\right|
 &=&  \underset{0 \leq t \leq s \leq 1}{\mathrm{sup}}\left| \frac{\mathcal{M}_1(s,t)\widetilde{S}_{0,0}(s,t)-\mathcal{M}_2(s,t)\widetilde{S}_{1,0}(s,t)-\mathcal{M}_3(s,t)\widetilde{S}_{0,1}(s,t)}{\mathcal{D}(s,t)}\right|,
 \end{align}
 where
\begin{eqnarray}
\nonumber
 \widetilde{S}_{0,0}(s,t) &=&S_{0,0}(s,t) - G(s,t) \mathcal{A}_{0,0}-h_G G^{(1,0)}(s,t) \mathcal{A}_{1,0} -h_G G^{(0,1)}(s,t) \mathcal{A}_{0,1}\\ \nonumber
 &=:& S_{0,0}(s,t) - \Lambda_{0,0}(s,t),
 \\ \nonumber
 \widetilde{S}_{1,0}(s,t) &=&S_{1,0}(s,t) - G(s,t) \mathcal{A}_{1,0}-h_G G^{(1,0)}(s,t) \mathcal{A}_{2,0} -h_G G^{(0,1)}(s,t) \mathcal{A}_{1,1}  \\ \nonumber
 &=:& S_{1,0}(s,t) - \Lambda_{1,0}(s,t), \\ \nonumber
  \widetilde{S}_{0,1}(s,t) &=&S_{0,1}(s,t) - G(s,t) \mathcal{A}_{0,1}-h_G G^{(1,0)}(s,t) \mathcal{A}_{1,1} -h_G G^{(0,1)}(s,t) \mathcal{A}_{0,2}\\ \nonumber
  &=:& S_{0,1}(s,t) - \Lambda_{0,1}(s,t).
\end{eqnarray}
We now investigate the different terms appearing in \eqref{Ghat - G : decomposition} separately. First, consider $\widetilde{S}_{0,0}$ and observe that 
\begin{eqnarray*}
  \widetilde{S}_{0,0}(s,t) &=&\frac{1}{nh^2_G}\sum_{i=1}^{n}\frac{2}{r(r-1)}\sum_{1\leq k<j \leq r} \left[ W\left( \frac{T_{ij}-s}{h_G}\right) W\left( \frac{T_{ik}-t}{h_G}\right)Y_{ij}Y_{ik}\right] - \Lambda_{0,0}(s,t)\\ 
  &=& \frac{1}{nh^2_G}\sum_{i=1}^{n}\frac{2}{r(r-1)}\sum_{1\leq k<j \leq r} \left[ W\left( \frac{T_{ij}-s}{h_G}\right) W\left( \frac{T_{ik}-t}{h_G}\right)\left(X_{ij} + U_{ij}\right)\left(X_{ik} + U_{ik}\right)\right] \\ 
  &&- \Lambda_{0,0}(s,t)\\
  &=&\frac{1}{nh^2_G}\sum_{i=1}^{n}\frac{2}{r(r-1)}\sum_{1\leq k<j \leq r} \left[ W\left( \frac{T_{ij}-s}{h_G}\right) W\left( \frac{T_{ik}-t}{h_G}\right) U_{ij}U_{ik}\right]\\
  &&+\frac{1}{nh^2_G}\sum_{i=1}^{n}\frac{2}{r(r-1)}\sum_{1\leq k<j \leq r} \left[ W\left( \frac{T_{ij}-s}{h_G}\right) W\left( \frac{T_{ik}-t}{h_G}\right) X_{ij}U_{ik}\right]\\
  &&+\frac{1}{nh^2_G}\sum_{i=1}^{n}\frac{2}{r(r-1)}\sum_{1\leq k<j \leq r} \left[ W\left( \frac{T_{ij}-s}{h_G}\right) W\left( \frac{T_{ik}-t}{h_G}\right) U_{ij}X_{ik}\right]\\
  &&+\frac{1}{nh^2_G}\sum_{i=1}^{n}\frac{2}{r(r-1)}\sum_{1\leq k<j \leq r} \left[ W\left( \frac{T_{ij}-s}{h_G}\right) W\left( \frac{T_{ik}-t}{h_G}\right) \left(X_{ij}X_{ik}-G\left(T_{ij},T_{ik} \right)\right)\right]\\
  &&+\frac{1}{nh^2_G}\sum_{i=1}^{n}\frac{2}{r(r-1)}\sum_{1\leq k<j \leq r} \left[ W\left( \frac{T_{ij}-s}{h_G}\right) W\left( \frac{T_{ik}-t}{h_G}\right) G\left(T_{ij},T_{ik} \right)\right]- \Lambda_{0,0}(s,t)\\
  &=:& A_1+A_2+A_3+A_4+A_5.
\end{eqnarray*}
The expressions $A_1-A_4$, representing the variance term, can be written in the general form
\begin{align}
\nonumber 
 &\frac{1}{nh^2_G}\sum_{i=1}^{n}\frac{2}{r(r-1)}\sum_{1\leq k<j \leq r} \left[ W\left( \frac{T_{ij}-s}{h_G}\right) W\left( \frac{T_{ik}-t}{h_G}\right) Z_{ijk}\right] \\  \nonumber 
&=  \frac{1}{nh^2_G}\sum_{i=1}^{n}\frac{2}{r(r-1)}\sum_{1\leq k<j \leq r} \left[ W\left( \frac{T_{ij}-s}{h_G}\right) W\left( \frac{T_{ik}-t}{h_G}\right) Z_{ijk}\right] \\ 
\label{sum:Z_ijk:c:c} 
& \hspace{4.1cm}\times \left[\mathbb{I}\left(  \left(T_{ij}, T_{ik}\right) \in [s-h_G , s+h_G]^{c} \times [t-h_G , t+h_G]^{c} \right)\right.\\  
\label{sum:Z_ijk:c:} 
& \hspace{4.5cm}+ \left.\mathbb{I}\left(  \left(T_{ij}, T_{ik}\right) \in [s-h_G , s+h_G]^{c} \times [t-h_G , t+h_G] \right)\right.\\  \label{sum:Z_ijk::c} 
& \hspace{4.5cm}+ \left.\mathbb{I}\left(  \left(T_{ij}, T_{ik}\right) \in [s-h_G , s+h_G] \times [t-h_G , t+h_G]^{c} \right)\right.\\ 
 \label{sum:Z_ijk} 
& \hspace{4.5cm}+ \left.\mathbb{I}\left(  \left(T_{ij}, T_{ik}\right) \in [s-h_G , s+h_G] \times [t-h_G , t+h_G] \right)\right]\\
\nonumber
&=: \bar{Z}_{1,1} (s,t) +\bar{Z}_{1,0} (s,t)+\bar{Z}_{0,1} (s,t)+\bar{Z}_{0,0} (s,t),
\end{align}
where each $Z_{ijk}$ has mean zero. The term $A_5$ corresponds to the bias term for which we use a Taylor series expansion to obtain the desired almost sure uniform bound. For \eqref{sum:Z_ijk:c:c}, observe that
\begin{align}
\label{Z_11}
\bar{Z}_{1,1} (s,t) &= \frac{1}{nh^2_G}\sum_{i=1}^{n}\frac{2}{r(r-1)}\sum_{1\leq k<j \leq r} \left[ W\left( \frac{T_{ij}-s}{h_G}\right) W\left( \frac{T_{ik}-t}{h_G}\right) Z_{ijk}\right] \\  \nonumber
 &\hspace{4.1cm}\times \mathbb{I}\left(  \left(T_{ij}, T_{ik}\right) \in [s-h_G , s+h_G]^{c} \times [t-h_G , t+h_G]^{c} \right)\\ \nonumber
 &\leq  W^{2}\left( 1^{+} \right)\frac{1}{nh^2_G}\sum_{i=1}^{n}\frac{2}{r(r-1)}\sum_{1\leq k<j \leq r} \vert Z_{ijk} \vert \\ \nonumber
&= O\left( h_G^{-2} \right) W^{2}\left( 1^{+} \right),\;\;\;\; \mathrm{\;a.s\;uniformly\; on\;\;}  0 \leq t \leq s \leq 1 \\
&=  O\left( h_G^{2} \right) ,\;\;\;\; \mathrm{\;a.s\;uniformly\; on\;\;}  0 \leq t \leq s  \leq 1.
\end{align}
For \eqref{sum:Z_ijk:c:} (similarly \eqref{sum:Z_ijk::c}) we have
\begin{align}
\label{Z_10}
\bar{Z}_{1,0} (s,t) &= \frac{1}{nh^2_G}\sum_{i=1}^{n}\frac{2}{r(r-1)}\sum_{1\leq k<j \leq r} \left[ W\left( \frac{T_{ij}-s}{h_G}\right) W\left( \frac{T_{ik}-t}{h_G}\right) Z_{ijk}\right] \\  \nonumber
&\hspace{4.1cm} \times \mathbb{I}\left(  \left(T_{ij}, T_{ik}\right) \in [s-h_G , s+h_G]^{c} \times [t-h_G , t+h_G] \right)\\ \nonumber
&\leq  \left(\int \left\vert W^{\dagger}\left( u\right)\right\vert du \right)  W\left( 1^{+} \right)   \frac{1}{nh^2_G}\sum_{i=1}^{n}\frac{2}{r(r-1)}\sum_{1\leq k<j \leq r} \vert Z_{ijk} \vert\\ \nonumber
&= O\left( h_G^{-2} \right) \left(\int \left\vert W^{\dagger}\left( u\right)\right\vert du \right)  W\left( 1^{+} \right),\;\;\;\; \mathrm{\;a.s\;uniformly\; on\;\;}  0 \leq t \leq s  \leq 1 \\
&=  O\left( h_G^{2} \right) ,\;\;\;\; \mathrm{\;a.s\;uniformly\; on\;\;}  0 \leq t \leq s  \leq 1.
\end{align}
For \eqref{sum:Z_ijk} we have 
\begin{align}
\nonumber
\bar{Z}_{0,0} (s,t)&=&
 \\ \label{int:int:W:W}
 &&  \frac{1}{n}\sum_{i=1}^{n}\frac{2}{r(r-1)}\sum_{1\leq k<j \leq r} \left[ Z_{ijk} \iint e^{-\mymathbb{i}us-\mymathbb{i}vt+\mymathbb{i}uT_{ij}+\mymathbb{i}vT_{ik}}W^{ \dagger} (h_Gu)W^{ \dagger} (h_Gv)dudv\right]  \\  \nonumber
&& \hspace{4cm}\times \mathbb{I}\left(  \left(T_{ij}, T_{ik}\right) \in [s-h_G , s+h_G] \times [t-h_G , t+h_G] \right).
\end{align}
Regarding $\mathbb{E}\bar{Z}_{0,0}(s,t) = 0 $, for all $0 \leq t \leq s \leq 1$, we have
 \begin{align}
  \nonumber
 &\vert \bar{Z}_{0,0} (s,t)- \mathbb{E}\bar{Z}_{0,0}(s,t)\vert \\ \label{int:W}
 &\leq \left(\int \left\vert W^{\dagger}\left( h_G u\right)\right\vert du \right)^{2}
 \\ \nonumber & \times
 %\underset{  0<t<s<1  }{\mathrm{sup}}
 \frac{1}{n}\sum_{i=1}^{n}\frac{2}{r(r-1)}\sum_{1\leq k<j \leq r}\left\vert Z_{ijk} \right\vert \mathbb{I}\left(  \left(T_{ij}, T_{ik}\right) \in [s-h_G , s+h_G] \times [t-h_G , t+h_G] \right)
 \\ \label{rate*h^4}
 & =  O\left( h^{-2}_{G}\right)
  \times
% \underset{  0<t<s<1  }{\mathrm{sup}}
 \frac{1}{n}\sum_{i=1}^{n}\frac{2}{r(r-1)}\sum_{1\leq k<j \leq r}\left\vert Z_{ijk} \right\vert \mathbb{I}\left(  \left(T_{ij}, T_{ik}\right) \in [s-h_G , s+h_G] \times [t-h_G , t+h_G] \right).
 \end{align}
The remaining argument for \eqref{sum:Z_ijk} is similar to the proof of Lemma (8.2.5) of \citet{hsing_theoretical_2015}. In more detail, for the summation term appearing in \eqref{rate*h^4} we have
 \begin{eqnarray*}
  &&\frac{1}{n}\sum_{i=1}^{n}\frac{2}{r(r-1)}\sum_{1\leq k<j \leq r}\left\vert Z_{ijk} \right\vert \mathbb{I}\left(  \left(T_{ij}, T_{ik}\right) \in [s-h_G , s+h_G] \times [t-h_G , t+h_G] \right)\\
  &=& \frac{1}{n}\sum_{i=1}^{n}\frac{2}{r(r-1)}\sum_{1\leq k<j \leq r}  
  \left\vert Z_{ijk}  \right\vert    \mathbb{I}\left(   \left\vert Z_{ijk}  \right\vert  \geq   Q_n \cup \left\vert Z_{ijk}  \right\vert  < Q_n   \right)\\
  && \hspace{4cm} \times \mathbb{I}\left(  \left(T_{ij}, T_{ik}\right) \in [s-h_G , s+h_G] \times [t-h_G , t+h_G] \right)\\
  &=:& B_1 + B_2.
  \end{eqnarray*}
According to condition \ref{sup|.|^a} and choosing $Q_n$ in such a way that $\left[ \frac{\mathrm{log}n}{n}\left( h_G^4 + \frac{h_G^3}{r(n)}+ \frac{h_G^2}{r^2(n)} \right)\right]^{-1/2} Q_n^{1-\alpha} = O(1)$, we conclude $B_1 = O \left( \left[ \frac{\mathrm{log}n}{n}\left( h_G^4 + \frac{h_G^3}{r(n)}+ \frac{h_G^2}{r^2(n)} \right)\right]^{1/2} \right)$, almost surely uniformly. In more detail
\begin{eqnarray}
%\label{B_1}
\nonumber
 B_1  & = & \frac{1}{n}\sum_{i=1}^{n}\frac{2}{r(r-1)}\sum_{1\leq k<j \leq r}
  \left\vert Z_{ijk}  \right\vert ^{1-\alpha+\alpha}   \mathbb{I}\left(   \left\vert Z_{ijk}  \right\vert  \geq   Q_n  \right)  \mathbb{I}\left(  \left(T_{ij}, T_{ik}\right) \in [s-h_G , s+h_G] \times [t-h_G , t+h_G] \right)\\ \nonumber
  & \leq & \frac{1}{n}\sum_{i=1}^{n}\frac{2}{r(r-1)}\sum_{1\leq k<j \leq r}
  \left\vert Z_{ijk}  \right\vert ^{\alpha}  Q_n^{1-\alpha}   \\ \nonumber
  & = & O(1) Q_n^{1-\alpha} \\
  &=&   \nonumber
  O\left( \left[ \frac{\mathrm{log}n}{n}\left( h_G^4 + \frac{h_G^3}{r(n)}+ \frac{h_G^2}{r^2(n)} \right)\right]^{1/2} \right),\;\;\;\; \mathrm{\;a.s\;uniformly\; on\;}  0<t<s<1.
  \end{eqnarray}
For $B_2$, first, define
  \begin{eqnarray}
  \nonumber
  B_2 &=& \frac{1}{n}\sum_{i=1}^{n}\frac{2}{r(r-1)}\sum_{1\leq k<j \leq r}  
  \left\vert Z_{ijk}  \right\vert    \mathbb{I}\left(  \left\vert Z_{ijk}  \right\vert  \leq Q_n   \right)
 \times \mathbb{I}\left(  \left(T_{ij}, T_{ik}\right) \in [s-h_G , s+h_G] \times [t-h_G , t+h_G] \right)\\ \nonumber
 &=:& \frac{1}{n}\sum_{i=1}^{n}\frac{2}{r(r-1)}\sum_{1\leq k<j \leq r}  \mathcal{Z}_{ijk}(s,t).
  \end{eqnarray}
We then apply Bennett's concentration inequality (see \citep{boucheron_concentration_2013}) to complete the proof of this part. We  obtain a uniform upper bound for  $\mathrm{Var\left( \frac{2}{r(r-1)}\sum_{1\leq k<j \leq r}  \mathcal{Z}_{ijk}(s,t) \right) }$, $i=1,2,\ldots,n$, in the following way
\begin{eqnarray}
\nonumber
 &&\mathrm{Var\left( \frac{2}{r(r-1)}\sum_{1\leq k<j \leq r}  \mathcal{Z}_{ijk}(s,t) \right) } 
 \\ \nonumber &=& \left(\frac{2}{r(r-1)}\right)^{2}
 \sum_{1\leq k_1<j_1 \leq r}  \sum_{1\leq k_2<j_2 \leq r} \mathrm{Cov}\left( \mathcal{Z}_{ij_1k_1}(s,t) , \mathcal{Z}_{ij_2k_2}(s,t) \right)\\ \label{var:Bern}
 & \leq & c \left( h_G^4 + \frac{h_G^3}{r(n)}+ \frac{h_G^2}{r^2(n)} \right),
\end{eqnarray}
for some positive constant $c$ which  depends neither on $(s,t)$ nor on $i$. Note that inequality \eqref{var:Bern} is a direct consequence of conditions \ref{sup|.|^a}  and \ref{bound:density}. 
Finally, applying Bennett's inequality and choosing $Q_n = \left(\frac{\mathrm{log}n}{n}\right)^{-1/2}\left( h_G^4 + \frac{h_G^3}{r(n)}+ \frac{h_G^2}{r^2(n)} \right)^{1/2}$ we have, for any positive number $\eta$,
\begin{align}
\nonumber
 &\mathbb{P}\left( \frac{1}{n}\sum_{i=1}^{n}\frac{2}{r(r-1)}\sum_{1\leq k<j \leq r}  \mathcal{Z}_{ijk}(s,t) \geq \eta \left[ \frac{\mathrm{log}n}{n}\left( h_G^4 + \frac{h_G^3}{r(n)}+ \frac{h_G^2}{r^2(n)} \right) \right]^{1/2} \right) \\ \nonumber
 &\leq\mathrm{exp}\left\{ -\frac{\eta^2 n^2 \left[ \frac{\mathrm{log}n}{n}\left( h_G^4 + \frac{h_G^3}{r(n)}+ \frac{h_G^2}{r^2(n)} \right) \right]}{2nc\left( h_G^4 + \frac{h_G^3}{r(n)}+ \frac{h_G^2}{r^2(n)} \right)+\frac{2}{3}\eta n\left( h_G^4 + \frac{h_G^3}{r(n)}+ \frac{h_G^2}{r^2(n)} \right)} \right\} \\ \nonumber
 &= \mathrm{exp}\left\{-\frac{\eta^2   \mathrm{log}n }{ 2c+\frac{2}{3}\eta} \right\}\\ \label{sum:Bern}
 &= n^{-\frac{\eta^2 }{ 2c+\frac{2}{3}\eta} }\;, \;\;\;\; \forall 0 \leq t \leq s \leq 1.
\end{align}
Choosing $\eta$ large enough we conclude summability of \eqref{sum:Bern}. This result together combined with the Borel-Cantelli lemma completes the proof of this part. In other words we conclude there exists a subset $\Omega_0 \subset \Omega$ of full probability measure such that for each $\omega \in \Omega_0$ there exists $n_0 = n_0(\omega) $ with 
\begin{align}\label{rate:varianc}
 \frac{1}{n}\sum_{i=1}^{n}\frac{2}{r(r-1)}\sum_{1\leq k<j \leq r}  \mathcal{Z}_{ijk}(s,t) \leq \eta \left[ \frac{\mathrm{log}n}{n}\left( h_G^4 + \frac{h_G^3}{r(n)}+ \frac{h_G^2}{r^2(n)} \right) \right]^{1/2},\;\;\;\; n \geq n_0.
\end{align}
We now turn to the bias term $A_5$ and investigate its convergence. Observe that
\begin{align}
\nonumber
 A_5 &= \frac{1}{nh^2_G}\sum_{i=1}^{n}\frac{2}{r(r-1)}\sum_{1\leq k<j \leq r} \left[ W\left( \frac{T_{ij}-s}{h_G}\right) W\left( \frac{T_{ik}-t}{h_G}\right) G\left(T_{ij},T_{ik} \right)\right]- \Lambda_{0,0}(s,t)\\ \nonumber
 &=\frac{1}{nh^2_G}\sum_{i=1}^{n}\frac{2}{r(r-1)}\sum_{1\leq k<j \leq r} W\left( \frac{T_{ij}-s}{h_G}\right) W\left( \frac{T_{ik}-t}{h_G}\right)\left[  G\left(T_{ij},T_{ik} \right) - G(s,t) \right.\\ \nonumber
 & \hspace{5cm}
 \left.
 -\left( T_{ij}-s\right) G^{(1,0)}(s,t) - \left( T_{ik}-t\right) G^{(0,1)}(s,t)\right]\\ \label{rate:bias}
 & =  O\left( h^2_G \right),\;\;\;\; \mathrm{a.s\;uniformly \; on \;} 0 \leq t \leq s \leq 1.
\end{align}
The last expression is a consequence of the smoothness condition \ref{2-diff:C}. Relations \eqref{rate:varianc} and \eqref{rate:bias} complete the proof for  term $\widetilde{S}_{0,0}$ appearing in decomposition \eqref{Ghat - G : decomposition}. The proofs for the two other terms $\widetilde{S}_{1,0}$ and $\widetilde{S}_{0,1}$ are similar except for $W^\dagger(\cdot)W^\dagger(\cdot)$ appearing in \eqref{Z_11} and \eqref{Z_10} for each of which we may use either $W^\dagger(\cdot)\Xi_p^\dagger(\cdot)$, $p=1,2,3$,  instead. 
%and use conditions   \ref{fourier:Id*W} and \ref{decay:Xi} for the rest.
One may also easily modify the arguments for \eqref{int:int:W:W} and \eqref{int:W}. The result for the   terms $\mathcal{M}_1$, $\mathcal{M}_2$, $\mathcal{M}_3$, and $\mathcal{D}$ can be obtain by a slight modification of the above  argument assigning $Z_{ijk} = 1$ in relation \eqref{sum:Z_ijk}.
\end{proof}

%\begin{proof}[proof of Theorem \ref{Thm:muhat-mu}] The proof of this result follows similarly. \end{proof}
%**************************

\subsection{Choice of Kernel Function}

In the formulation of our methods and asymptotic theory, we recommended the use of two specific kernel functions, whose form evolves with $n$, namely $ \mathbf{W}_n(u)$ (equation \eqref{kernel1}) and  $\mathbb{W}_n(u)$ (equation \eqref{kernel2}). These differ from a more standard choice of a smooth compactly supported kernel function whose form is fixed with respect to $n$. The reason for this is technical: with such a choice, there is a positive probability that the estimators  {$\widehat{\mu}$ and $\widehat{G}$ }%in Theorem \ref{Thm:Ghat-G} and Theorem  \ref{Thm:muhat-mu} 
may fail to exist on a set of positive Lebesgue measure, regardless of sample size. To see this, consider the mean estimator, and a standard design where the $\{T_{ij}\}$ are independent and uniform on $[0,1]$. In such a case, the support of the mean is sampled at the $nr_n$ independent and uniform locations, and writing $T_{(1)}=\min_{i\leq n,j\leq r_n} T_{ij}$,  we have $\mathbb{P}\{T_{(1)}>h\} = (1-h)^{nr_n}>0$, for all $n\geq 1$. On this event, the estimator is undefined on an open set near the left boundary. This is not just a boundary issue: it can also occur in the ``interior" part of domain of the mean function, i.e. it can be shown that $\mathbb{P}\Big\{\cup_{j=2}^{nr_n} \{T_{(j)} - T_{(j-1)}>2h\} \Big\}>0$, where $\{T_{(j)}\}_{1\leq j \leq nr_n}$ is the sequence of ordered statistics of $\{T_{ij}\}$.
To circumvent this problem, we choose a sequence such that $W_n(u) \neq 0$ for all $n$ and $u \in \mathbb{R}$ satisfying:
\begin{itemize}
    \item[\namedlabel{fourier:W}{C(3)}] The Fourier transforms of the kernel sequence $W_n$ are integrable with bounded integrals, i.e. $\int \left\vert W_n^{\dagger}\left(  u\right)\right\vert du < M < \infty$.
    
    \item[\namedlabel{fourier:Id*W}{C(4)}] Defining $\Xi_{n,p} (u) = W_n(u) u^p$, the Fourier transforms of $\Xi_{n,p}$ are integrable with bounded integrals, i.e. $\int \left\vert \Xi_{n,p}^{\dagger}\left(  u\right)\right\vert du < M< \infty$, for $p=1,2,3$.
    
    \item[\namedlabel{decay:W}{C(5)}]   The kernel functions $W_n(\cdot)$ are symmetric around zero and monotone non-increasing on $[1,+\infty)$.

    \item[\namedlabel{decay:Xi}{C(6)}] The functions  $\Xi_{n,p}(\cdot)$  are non-increasing on $[1, \infty)$ and the limiting objects  $\Xi_{n,p}(1^+) := \underset{u \searrow 1 }{\mathrm{lim}} \Xi_{n,p}(u)$ ($= W_n(1^+)$) tend to zero as $n$ tends to infinity, for $p=1,2,3$.
\end{itemize}
Where we are using $g^{ \dagger}(\cdot)$  to indicate the Fourier transform  $g^{ \dagger}(\cdot) = \frac{1}{2 \pi} \int e^{-\mymathbb{i}u \cdot}g(u)du$, for general function $g(\cdot)$, and $\mymathbb{i}$ to denote  the imaginary unit.
Kernels satisfying \eqref{fourier:W}-\eqref{decay:Xi} will still yield Theorems \ref{Thm:Ghat-G}
%and \ref{Thm:muhat-mu}  
but with the presence of the additional terms  $O\left( h_G^{-2} \right)   W_n\left( 1^{+}\right) $ 
%and $ O\left( h_{\mu}^{-1} \right)  W_n\left( 1^{+}\right) $ 
in the right hand side of  \eqref{bound:Ghat-G}.
%and \eqref{bound:muhat-mu}, respectively.
To remove this additional term, one needs to ensure that this term is dominated by the remaining terms, and this requires specifying the kernel more concretely. Our special choices of  $ \mathbf{W}_n(u)$ (equation \eqref{kernel1}) or  $\mathbb{W}_n(u)$ (equation \eqref{kernel2}) do just that: these are kernels that satisfy \ref{fourier:W}--\ref{decay:Xi} while also ensuring that the extra term $O\left( h_G^{-2} \right)   W_n\left( 1^{+}\right) $ 
%and $ O\left( h_{\mu}^{-1} \right)  W_n\left( 1^{+}\right) $  are 
is dominated by the remaining terms.
\begin{comment}
Kernels satisfying \eqref{fourier:W}-\eqref{decay:Xi} will still yield Theorems \ref{Thm:Ghat-G} and \ref{Thm:muhat-mu}   but with the presence of the additional terms  $O\left( h_G^{-2} \right)   W_n\left( 1^{+}\right) $ and $ O\left( h_{\mu}^{-1} \right)  W_n\left( 1^{+}\right) $ in the right hand side of  \eqref{bound:Ghat-G} and \eqref{bound:muhat-mu}, respectively. To remove these additional terms, one needs to ensure that they are dominated by the remaining terms, and this requires specifying the kernel more concretely. Our special choices of  $ \mathbf{W}_n(u)$ (equation \eqref{kernel1}) or  $\mathbb{W}_n(u)$ (equation \eqref{kernel2}) do just that: these are kernels that satisfy \ref{fourier:W}--\ref{decay:Xi} while also ensuring that the extra terms $O\left( h_G^{-2} \right)   W_n\left( 1^{+}\right) $ and $ O\left( h_{\mu}^{-1} \right)  W_n\left( 1^{+}\right) $  are dominated by the remaining terms.
\end{comment}
%%%%%%%%%%%%%
%\bibliographystyle{chicago}
\bibliography{CovEstimation}
\end{document}